\newcommand{\vrs}{arxiv}
\newtheorem{theorem}{Theorem}[section]
\newtheorem{definition}[theorem]{Definition}
\newtheorem{lemma}[theorem]{Lemma}
\newtheorem{claim}[theorem]{Claim}
\newtheorem{corollary}[theorem]{Corollary}
\newtheorem*{remark}{Remark}
\newcommand{\Adj}{\mathrm{adj}}
\newcommand{\cI}{{\mathcal I}}
\newcommand{\N}{\mathbb{Z}_{\ge 0}}
\newcommand{\Q}{\mathbb{Q}}
\newcommand{\veca}{\ensuremath{\boldsymbol{a}}}
\newcommand{\vecb}{\ensuremath{\boldsymbol{b}}}
\newcommand{\vecc}{\ensuremath{\boldsymbol{c}}}
\newcommand{\vecu}{\ensuremath{\boldsymbol{u}}}
\newcommand{\cone}{\mathrm{cone}}
\newcommand{\intcone}{\mathrm{intcone}}
\newcommand{\latt}{{\mathcal L}}
\newcommand{\USS}{\textsc{Unbounded Subset Sum}\xspace}
\newcommand{\USSs}{\textsc{USS}\xspace}
\newcommand{\KN}{\textsc{Knapsack}\xspace}
\newcommand{\ILP}{\textsc{Integer Linear Programming}\xspace}
\newcommand{\ILPE}{\textsc{Integer Linear Programming with Equalities}\xspace}
\newcommand{\ILPs}{\textsc{ILP}\xspace}
\newcommand{\ILPEs}{\textsc{ILPE}\xspace}
\newcommand{\suppress}[1]{}
\newcommand{\MILP}{\textsc{Heterogeneous Integer Linear Programming}\xspace}
\newcommand{\MILPs}{\textsc{HILP}\xspace}
\newcommand{\vecv}{\ensuremath{\boldsymbol{v}}}
\newcommand{\tveca}{\ensuremath{\widetilde{\boldsymbol{a}}}}
\newcommand{\vecw}{\ensuremath{\boldsymbol{w}}}
\newcommand{\vecx}{\ensuremath{\boldsymbol{x}}}
\newcommand{\vecy}{\ensuremath{\boldsymbol{y}}}
\newcommand{\vecz}{\ensuremath{\boldsymbol{z}}}
\newcommand{\eps}{\varepsilon}
\DeclareMathOperator{\lat}{\mathcal{L}}
\DeclareMathOperator{\basis}{\mathbf{B}}
\DeclareMathOperator{\R}{\mathbb{R}}
\DeclareMathOperator{\Z}{\mathbb{Z}}
\newcommand{\floor}[1]{\lfloor{#1}\rfloor}
\newcommand{\ceil}[1]{\lceil{#1}\rceil}
\providecommand{\norm}[1]{\lVert#1\rVert}
\newcommand{\transpose}[1]{\ensuremath{#1^{\scriptscriptstyle T}}}
\newcommand{\poly}{\mathrm{poly}}
\renewcommand{\vec}[1]{\ensuremath{\boldsymbol{#1}}}
\newcommand{\dnote}[1]{\textcolor{cyan}{[Divesh: #1]}}
\newcommand{\knote}[1]{\textcolor{teal}{[\textbf{Karol}: #1]}}
\newcommand{\old}[1]{{\leavevmode\color{blue}[#1]}}
\newcommand{\mnote}[1]{\textcolor{red}{[Miklos: #1]}}
\newcommand{\anote}[1]{\textcolor{orange}{[Antoine: #1]}}
\newcommand{\dnote}[1]{}
\newcommand{\knote}[1]{}
\newcommand{\old}[1]{{}}
\newcommand{\mnote}[1]{}
\newcommand{\anote}[1]{}
\title{Polynomial Time Algorithms for Integer Programming and Unbounded Subset Sum in the Total Regime}
    \author{Anonymous}
\author{
    Divesh Aggarwal\footnote{Centre for Quantum Technologies and Department of Computer Science, NUS. This work was supported by the NRF investigatorship
grant, NRF-NRFI09-0005.}
    \and
    Antoine Joux\footnote{CISPA Helmholtz Center for Information Security, Germany. This work has been supported by the European Union's H2020 Programme under grant agreement number ERC-669891.}
    \and
    Miklos Santha\footnote{Centre for Quantum Technologies, National University of Singapore and CNRS, IRIF, Université de Paris. This research is supported by the National Research Foundation, Singapore and A*STAR under its CQT Bridging Grant.}
    \and
    Karol W\k{e}grzycki\footnote{Saarland University and Max Planck Institute for Informatics,
        Saarbr\"ucken, Germany. This work is part of the project TIPEA that has
    received funding from the European Research Council (ERC) under the European Union's Horizon
    2020 research and innovation programme (grant agreement No 850979).}
}}
\date{}
\begin{document}
\maketitle
\begin{abstract}

    \knote{blah blah blah}
The Unbounded Subset Sum (\USS) problem is an NP-hard computational problem
where the goal is to decide whether there exist non-negative integers $x_1,
\ldots, x_n$ such that $x_1 a_1 + \ldots + x_n a_n = b$, where $a_1 < \cdots <
a_n < b$ are distinct positive integers with $\gcd(a_1, \ldots, a_n)$ dividing
$b$. The problem can be solved in pseudopolynomial time, while specialized cases,
such as when $b$ exceeds the Frobenius number of $a_1, \ldots, a_n$ simplify to a
total problem where a solution always exists.

This paper explores the concept of totality in \USS. The challenge in this setting
is to actually find a solution, even though we know its existence is guaranteed.
We focus on the instances of \USS where solutions are guaranteed for large $b$.
We show that when $b$ is slightly greater than the Frobenius number, we can find
the solution to \USS in polynomial time.

We then show how our results extend to  \ILPE, highlighting conditions under
which \ILPE\ becomes total. We investigate the \emph{diagonal
Frobenius number}, which is the appropriate generalization of the Frobenius number to this context. In this setting, we give a  polynomial-time algorithm to find a solution of
\ILPE. The bound obtained from our algorithmic procedure for finding a solution almost matches the recent existential bound of Bach, Eisenbrand,
Rothvoss, and Weismantel (2024).

\end{abstract}
\ifthenelse{\equal{\vrs}{submission}}{
}{
\begin{picture}(0,0)
\put(450,-330)
{\hbox{\includegraphics[width=40px]{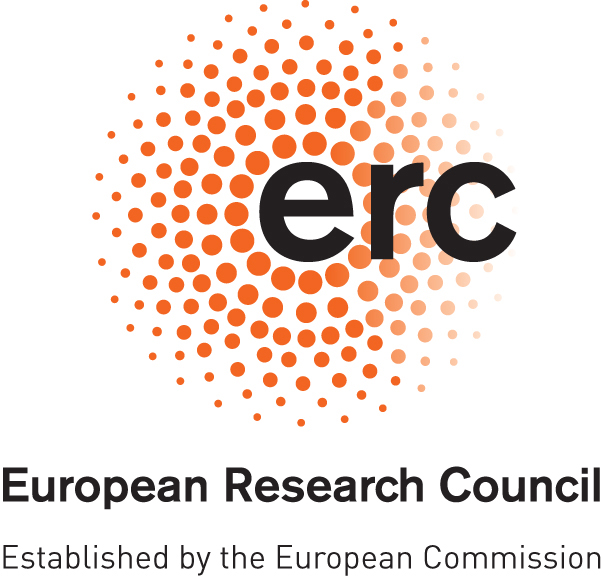}}}
\put(440,-400)
{\hbox{\includegraphics[width=60px]{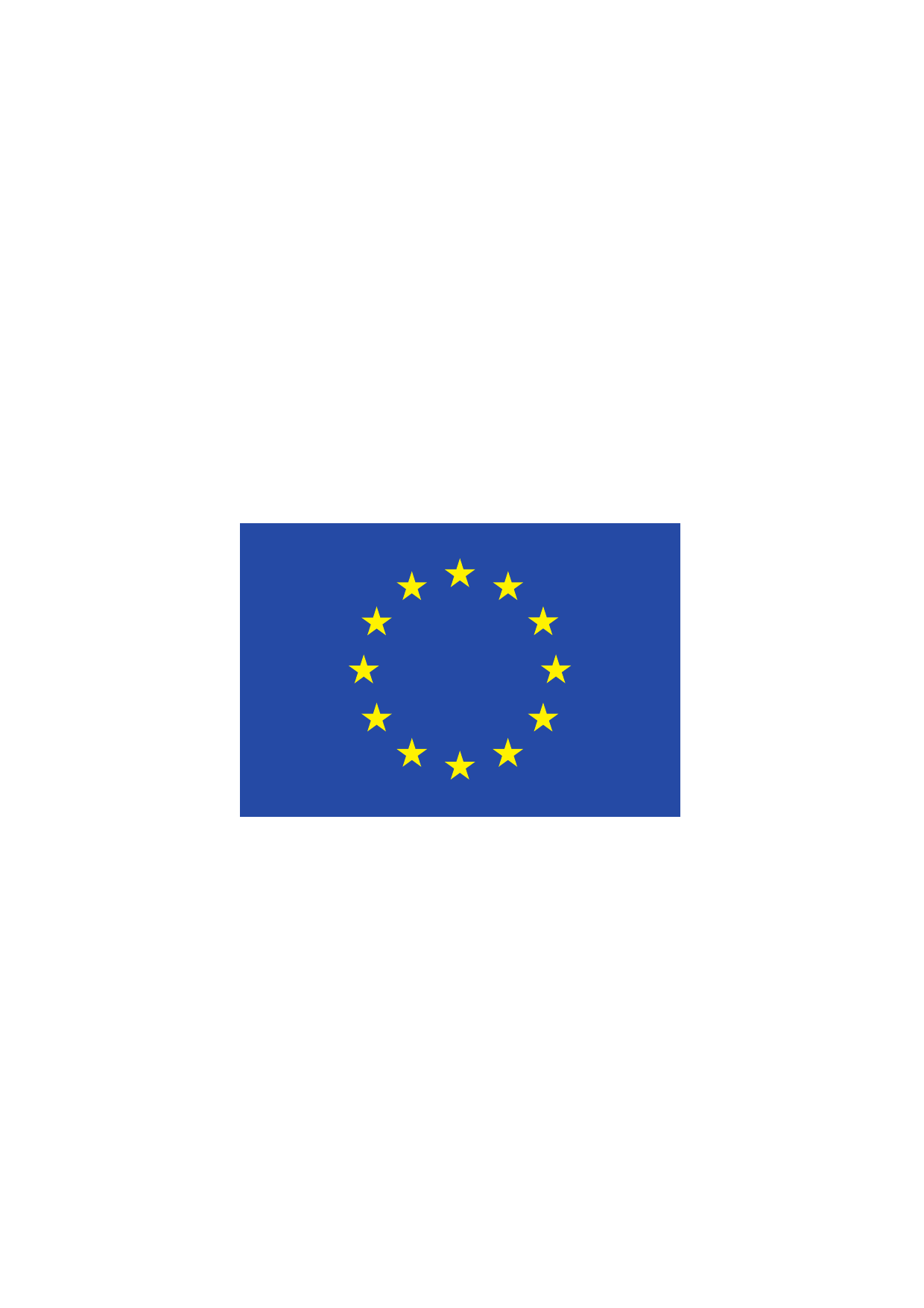}}}
\end{picture}
}

\section{Introduction}
In the \USS\ (in short notation \USSs) problem we are given $n$ distinct
positive integers $a_1 < \cdots < a_n$ with $\gcd(a_1, \ldots, a_n) = 1$ and a
target $b$. The task is to find non-negative integers $x_1,\ldots,x_n$ such that
$x_1 a_1 + \ldots + x_n a_n = b$, whenever such an $n$-tuple exists.  The
decision version of the problem is a variant of integer \KN\ and is well known
to be NP-complete, making \USSs\ NP-hard. It is a notoriously hard problem and
no algorithm which runs in time $2^{O(n)} (\log b)^{O(1)}$ is known for it. The
currently known fastest algorithm which solves \USSs, 
due to Reis and Rothvoss~\cite{DBLP:conf/focs/ReisR23}, and runs in time $(\log
n)^{O(n)}  (\log b)^{O(1)}$. This algorithm solves the general integer
programming problem of which \USSs\ is a special case.  Various
pseudopolynomial time algorithms using  dynamic programming were also given for
\USSs.  Supposing constant time arithmetic operations, Bringmann's
algorithm~\cite{DBLP:conf/soda/Bringmann17} works in time $O(b \log b)$, and the
one of Jansen and Rohwedder~\cite{DBLP:conf/innovations/JansenR19} in time
$O(a_n \log a_n \log(a_n +b))$.  For the setting of small $a_1$,  we have
algorithms in time $O(a_1^2 + n)$ by~\cite{HANSEN1996578}, in time $O(n a_1)$
by~\cite{bocker2007fast} and in time $O(a_1^{2-o(1)})$
by~\cite{DBLP:conf/soda/Klein22}.  These results are essentially tight assuming
widely believed fine-grained
hypotheses~\cite{DBLP:journals/jcss/AbboudBHS22,DBLP:conf/innovations/JansenR19,DBLP:conf/soda/Klein22}.
For constant $n$, the problem can be solved in polynomial time in the size of
the input numbers by generic integer programming
techniques~\cite{DBLP:journals/mor/Lenstra83, DBLP:conf/focs/ReisR23}.

Another case when the decision problem is easy to solve, and in fact trivializes, 
is when $b$ is sufficiently large. Indeed, for every $a_1, \ldots , a_n,$ there exists a largest integer
denoted by $g(a_1, \ldots , a_n)$ for which there is no solution, implying
that for every $b > g(a_1, \ldots , a_n)$, the problem always has a solution. This integer is called the
Frobenius number of $a_1, \ldots , a_n$, in honour of Frobenius~\cite{b2f296f6-cfc1-3d96-b687-b7c753783415} who first raised the problem of computing or estimating $g(a_1, \ldots , a_n)$.
There is a
substantial literature dedicated to this problem, including~\cite{heap1964graph, heap1965linear,hujter1987exact, kannan1992lattice, Alfonsin2005TheDF}. For $n=2$ the Frobenius number $g(a_1, a_2)=a_1a_2 - a_1 -a_2$
was determined by Sylvester~\cite{f85f7aa3-538b-330b-bb4c-261e026a757e} and relatively sharp estimates are known for $n=3$~\cite{Beck2003RefinedUB,
2009SbMat.200..597U}. In the general case the essentially sharpest upper bound 
of $g(a_1, \ldots , a_n) \leq 2 a_{n-1} \floor{\frac{a_n}{n}} - a_n$ is due to 
Erd\H{o}s and Graham~\cite{erdos1972linear} which is within a constant factor of the actual value. Variants and for some instances improvements of this bound can be found in ~\cite{Vitek_1976, Selmer1977, dixmier1990proof}. The exact computation of 
the Frobenius number is known to be NP-hard~\cite{ramirez1996complexity}.

\subparagraph*{Total Problems}

As we said, the decision version of \USSs\ is trivial when 
$b > g(a_1, \ldots , a_n)$ because the answer is always `yes'.
This makes the search problem total in the sense that there is always a solution.
It is worth to emphasize that the totality does not arise from a promise but rather from a mathematical property of the input. 
Furthermore, totality doesn't make the search problem necessarily easy to solve, and the
complexity of \USS in such instances is an
interesting research topic.  In fact, given any polynomila -time computable upper bound 
$u(a_1, \ldots, a_n) > g(a_1, \ldots , a_n),$ it is easy to verify 
that $b \geq u(a_1, \ldots, a_n).$ Thus, in this situation, the problem belongs to the complexity class TFNP.

Let us recall that, the class TFNP, introduced by Megiddo and Papadimitriou~\cite{MP91j}, consists of NP-search problems with total relations. It is known that no problem in TFNP can be NP-hard unless NP equals co-NP~\cite{Johnson1985HowEI, MP91j}. 
TFNP is believed not to have any complete problems.
Research on TFNP has mostly concentrated on sub-classes that can (also) be
defined by interesting complete problems. Examples include pure Nash Equilibrium
in a congestion game for PLS~\cite{Fabrikant2004TheCO}, Nash Equilibrium in a
two-player game~\cite{DBLP:journals/jacm/ChenDT09} or Multichromatic Simplex for
a Sperner coloring~\cite{DBLP:journals/tcs/ChenD09} for PPAD, and  Constrained Short Integer Solution for PPP~\cite{DBLP:conf/focs/SotirakiZZ18}.
However, for some important problems in TFNP, such as factoring or discrete
logarithm, which are not believed to be complete in any of these subclasses, the
research focus is on finding
the fastest possible algorithm. Our results are examples of this line of work.

\subparagraph*{Generalization of the Unbounded Subset Sum}
\ILPE\ (in short \ILPEs) is the natural generalization to higher dimensions of \USSs. In this problem, we are given 
$\vec{A}\in \Z^{d\times n}$ and $\vecb\in\mathbb Z^d$ and the
task is to find $\vecx \in \N^n$ such that $\vec{A}\vecx = \vecb$ if there
exists such an $\vecx$. In a variant, \ILP\ (in short \ILPs) on the same input
asks for $\vecx \in \N^n$ such that $\vec{A}\vecx \leq \vecb$. The two problems are easily inter-reducible in polynomial time.

Similarly to \USSs, the complexity of \ILPEs\ has been thoroughly studied. 
All known algorithms are exponential either in $n$ or in $\log L$, where
by definition $L$ is the  size of the maximum input number, that is
$L = \max \{\norm{\vec{A}}_\infty, \norm{\vecb}_\infty \}$.
The first algorithm of the former category was given by Lenstra~\cite{DBLP:journals/mor/Lenstra83} whose algorithm runs in time
$2^{O(n^3)} \poly(d \log L).$
This algorithm
provides a polynomial time solution when the number of variables is constant.
In a sequence of works~\cite{Kannan1987MinkowskisCB, Frank1987AnAO, Vempala2012IntegerPL} the polynomial for $n$ in the exponent was substantially
improved, and as it was mentioned for \USSs, the currently known fastest algorithm
due to Reis and Rothvoss~\cite{DBLP:conf/focs/ReisR23}, works in time
$2^{O(n \log \log n)} \poly(d \log L).$ The first algorithm of the second category,
which runs in time $n^{O(d)} (dL)^{O(d^2)},$
was presented by Papadimitriou~\cite{Papadimitriou81}. The currently fastest 
algorithm here, due to Jansen and
Rohwedder~\cite{DBLP:conf/innovations/JansenR19}, uses $O(d \norm{\vec A}_\infty )^{d}
+ O(dn)$ arithmetic operations.

Inspired by the existence of large targets which make \USSs\ a total problem
and by the relatively fast algorithms we could find in that case, we ask the
analogous question for the general \ILPEs. Our second result specifies conditions
on $\vecb$ under which \ILPEs\ becomes total and we are able to give a
polynomial time solution for it.

\subparagraph*{Total Regime of Integer Linear Programming}
The natural idea to generalize the Frobenius number to \ILPE would be to say that when an integer $\vec{b}$ (i) lies ``\emph{deep inside}'' a positive cone, and (ii) is in the lattice generated by
$\vec{A}$ then the answer to \ILPE is always positive. This,
however, is not true in general. Consider the following example:
\begin{displaymath}
    \vec{A} =
    \begin{pmatrix}
        9 & 10 & 9\\
        0 & 0  & 1
    \end{pmatrix}
    \in \mathbb{Z}^{2 \times 3}
    \text{ and }
    \vec{b} = 
    \begin{pmatrix}
        M\\
        M
    \end{pmatrix}
    \in \mathbb{Z}^2
    .
\end{displaymath}
Note that the lattice generated by $\vec{A}$ is $\mathbb{Z}^2$ and $\vec{b}$
lies ``\emph{deep-inside}'' the positive quarter for a large enough $M \in \N$. Nevertheless,
no matter how large the integer $M$ is, the target $\vec{b}$
cannot be obtained as a positive integral combination of columns of the matrix $\vec{A}$.
Hence, we need a more elaborate condition to properly generalize the Frobenius number. The
generalization we will consider, and which is arguably the appropriate one, is
the \emph{diagonal Frobenius number} $g(\vec{A})$ introduced by Aliev and
Henk~\cite{aliev2010feasibility}.

Consider the following three sets of points: $\cone(\vec{A}) = \{
\vec{A} \vecx \mid \vecx \in \mathbb{R}^n_{\ge 0}\}$, $\intcone(\vec{A}) =
\{\vec{A} \vecx \mid
\vecx \in \mathbb{Z}^n_{\ge 0}\},$ and $\latt(\vec{A})$ the lattice generated by
$\vec{A} \in \mathbb{Z}^{d \times n}$. The
diagonal Frobenius number $g(\vec{A})$ is the smallest non-negative integer $t$ such that
$$ \text{for every } \vecz \in \{\vec{A} \vecx \mid \vecx \ge  t 
\cdot \vec{1}\} \cap
\latt(\vec{A}) \text{ implies } \vecz \in \intcone(\vec{A}).$$
Recently, the existential statements of diagonal Frobenius numbers have been
used by Cslovjecsek et al.~\cite{DBLP:conf/soda/CslovjecsekKLPP24} to solve
two-stage stochastic programs and by Guttenberg et
al.~\cite{DBLP:conf/concur/GuttenbergRE23} to study geometric properties of
the Vector Addition Systems. The exact bounds are usually studied with respect
to the $\det(\vec{A}\transpose{\vec{A}})$ parameter. For example, Aliev and
Henk~\cite{aliev2010feasibility} show that the diagonal Frobenius number of
$\vec{A}$ is at most: 
\begin{displaymath}
    g(\vec{A}) \le \frac{(n-m)}{2}\sqrt{n \cdot \det(\vec{A} \transpose{\vec{A}})}.
\end{displaymath}

Very recently, Bach et
al.~\cite{bach2024forallexiststatementspseudopolynomialtime} considered the
diagonal Frobenius number parameterized by $\norm{\vec{A}}_\infty$. In that
setting, they improve the result of~\cite{aliev2010feasibility} and show that:
\begin{equation}\label{eq:fritz}
    g(\vec{A}) \le d \cdot \left(2d \cdot \norm{\vec{A}}_\infty+1\right)^d.
\end{equation}
Note that this bound is independent of $n$.

Several other generalizations of the Frobenius problem are already known in
the literature, for example, the $s$-Frobenius
number~\cite{fukshansky2011bounds}, semigroups~\cite{fel2006frobenius}, and
higher dimensions~\cite{FAN2015533} (see~\cite{shallit2008frobenius} for
references). Also, other restricted versions of ILP with polynomial-time
solutions have been considered in the literature. Perhaps the best-known
example is the case where the constraint matrix of the program is totally
unimodular. In this case, the linear programming relaxation is naturally
integral. Artmann, Weismantel, and
Zenklusen~\cite{DBLP:conf/stoc/ArtmannWZ17} extended this setting and gave a
polynomial-time algorithm for bimodular ILP, where all minors of the
constraint matrix are bounded in absolute value by 2. Many algorithms
obtained in the FPT context work in polynomial time when the respective
parameter is constant. Such examples include $n$-fold
ILP~\cite{DBLP:journals/disopt/LoeraHOW08, DBLP:journals/mp/HemmeckeOR13,
DBLP:journals/siamdm/JansenLR20}, three-fold
ILP~\cite{DBLP:conf/soda/ChenM18}, and more generally, ILP with block
structure~\cite{DBLP:conf/soda/CslovjecsekEHRW21,
DBLP:journals/mp/KnopKLMO23}.

\subsection{Our contribution} 

Our first result addresses the USS problem 
for a series of functions $u_t(a_1, \ldots, a_n)$, where the target $b$ is at least
$u_t(a_1, \ldots, a_n)$. These functions are at least as big as
the Erd\H{o}s and Graham bound of $\frac{a_n^2}{n-1}$, but less than $a_n^2$.

\begin{restatable}{theorem}{thmUSS}
    \label{thm:USS}
    Let $k$ be a non-negative integer. There is a $\poly(n,\log b)
\cdot (\log k)^{O(k)}$ time algorithm that given an {\rm \USS} instance $(n, a_1, \ldots, a_n, b)$ such that $b \ge \frac{a_i^2}{i-1}$, for
all $k < i \leq n$, and $\gcd(a_1, \ldots, a_n)$ divides $b$, finds $x_1, \ldots, x_n \in \N$ such that $\sum_{i=1}^n a_i x_i = b$.
\end{restatable}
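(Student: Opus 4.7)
My plan is to reduce the $n$-variable \USSs{} instance to a $k$-variable one in polynomial time, and then solve the residual $k$-variable subproblem by the integer programming algorithm of Reis and Rothvoss~\cite{DBLP:conf/focs/ReisR23}, which runs in time $(\log k)^{O(k)}\,\poly(\log b)$. Combined, this matches the claimed $\poly(n,\log b)\cdot(\log k)^{O(k)}$ bound.

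The reduction proceeds by descending induction on $i$ from $n$ down to $k+1$, peeling off one variable at each step. At step $i$, given a current target $b_i$ with $b_n=b$, I pick a non-negative integer $x_i$ and set $b_{i-1} := b_i - x_i a_i$, aiming to preserve three invariants as $i$ decreases: (a)~$b_{i-1}\ge 0$; (b)~$\gcd(a_1,\ldots,a_{i-1})\mid b_{i-1}$, so that the residual lattice condition holds; and (c)~$b_{i-1}\ge a_j^2/(j-1)$ for every $k<j\le i-1$, so that the hypothesis of the theorem still applies to the smaller instance. Condition~(b) forces $x_i$ to lie in an arithmetic progression of common difference $m_i := \gcd(a_1,\ldots,a_{i-1})/\gcd(a_1,\ldots,a_i)$; by Bezout (noting that $\gcd(a_i,\gcd(a_1,\ldots,a_{i-1}))=\gcd(a_1,\ldots,a_i)$ divides $b_i$), a representative of this class in $\{0,1,\ldots,m_i-1\}$ is computable in polynomial time via the extended Euclidean algorithm. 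Conditions~(a) and~(c) confine $x_i$ to a bounded interval, inside which I take the smallest admissible element of the progression.

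The main obstacle is showing that the arithmetic progression from~(b) meets the feasibility interval from~(a) and~(c) at every step: the gap $(b_i - a_{i-1}^2/(i-2))/a_i$ must accommodate the stride $m_i$. The hypothesis $b\ge a_n^2/(n-1)$ provides cumulative slack of order $a_n^2/n$, while the telescoping identity $\prod_{i=k+1}^{n} m_i = \gcd(a_1,\ldots,a_k)/\gcd(a_1,\ldots,a_n) \le a_1$ bounds the total ``gcd cost'' across all $n-k$ reductions. A careful amortization---using the slack conditions for every $j>k$, not just $j=n$, and lower-bounding $b_i$ throughout the iteration rather than only at the start---should establish that the reduction succeeds for all valid inputs, leaving a $k$-variable instance on which Reis--Rothvoss can be run. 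The technical heart of the proof is handling pathological cases where an individual $m_i$ approaches $a_1$, which may require a bundled reduction step eliminating several variables simultaneously so as to rebalance slack against ``gcd cost'' in a single amortized charge.
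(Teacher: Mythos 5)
Your overall plan---peel off $a_n,a_{n-1},\ldots,a_{k+1}$ one at a time by picking the smallest admissible residue for each $x_i$, then call Reis--Rothvoss on the residual $k$-variable instance---is the paper's plan, and your identification of the stride $m_i=\gcd(a_1,\ldots,a_{i-1})/\gcd(a_1,\ldots,a_i)$ is exactly right. What you are missing is the one device that makes the induction close without any amortization: \emph{divide the entire residual instance through by $d:=\gcd(a_1,\ldots,a_{i-1})$ after each peel}, so that the recursive call is on $\bigl(a_1/d,\ldots,a_{i-1}/d,(b_i - x_i a_i)/d\bigr)$ rather than on the original items with a shrunken target. Lemma~\ref{lem:induction_hyptothesis} says precisely that $d\,(b_i - x_i a_i) > b_i$, which after dividing by $d^2$ reads $(b_i - x_i a_i)/d > b_i/d^2 \ge (a_j/d)^2/(j-1)$ for every $k<j<i$. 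In other words, the Erd\H{o}s--Graham slack at \emph{every} surviving index is preserved (in fact strictly improved) in the rescaled coordinates, so the invariant is an honest step-by-step loop invariant and no global accounting is needed.

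Your unscaled invariant~(c), $b_{i-1} \ge a_j^2/(j-1)$ with the original $a_j$, is genuinely too strong to maintain. Take $k=1$, $(a_1,a_2,a_3)=(2,4,5)$, $b=17$: the hypotheses $b\ge a_2^2/1=16$ and $b\ge a_3^2/2=12.5$ hold, $d=\gcd(2,4)=2$, $x_3=1$, and $b_2=12<16=a_2^2/1$, so~(c) already fails after one step (the rescaled invariant $6\ge(4/2)^2/1$ holds fine). The amortization you sketch does not repair this: $\prod_{i>k} m_i = \gcd(a_1,\ldots,a_k)$ controls a \emph{product}, but the losses $x_i a_i$ \emph{add}, and a single large $m_i$ can destroy more additive slack than a product bound certifies; you sense this (``bundled reduction'') but give no mechanism. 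Rescaling is the mechanism: it charges the loss against the items $a_j$ (which also shrink by the factor $d$), so the ratio $b/a_j^2$ that the hypothesis controls only improves. Without it your reduction, as stated, does not establish that the level-$(k+1)$ instance satisfies Erd\H{o}s--Graham, which is what guarantees the final ILP call succeeds.
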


This means that when $b$ is only greater than $u_n$ then the running time is the ILP bound of~\cite{DBLP:conf/focs/ReisR23} which we have anyhow without totality, but when
$b$ is greater than $u_0$, or as a matter of fact $u_c$, for some $c \in \N$,
then the running time is polynomial, see \cref{cor:USS}. Additionally, the complexity of our algorithm
smoothly transitions between these two extremes. The algorithm itself is
constructed by induction on the number of items, see~\cref{sec:USS}.

Inspired by the existence of large targets that make \USSs a total problem and by
the relatively fast algorithms we discovered in that case, we ask the analogous
question for \ILPEs. Our second result specifies conditions on $\vecb$
under which \ILPEs become total, allowing us to provide a polynomial-time
solution for them.

\begin{theorem}[Weaker version of Theorem~\ref{thm:pre-main}]\label{thm:simplified}
    Given target $\vec{b} \in \Z^d$ and constraint matrix $\vec{A} \in \Z^{d \times n}$ with column vectors $\veca_1, \ldots,
    \veca_n$ and  $\Delta = \max_{i=1}^n \ceil{\norm{\veca_i}}$.
    If 
    \begin{displaymath}
        t \ge (n-d) \cdot \Delta^d \text{ and } \vec{b} \in \lat(\vec{A}) \cap \{
        \vec{A} \cdot \vec{y} \mid \vec{y} \ge t \cdot \vec{1} \}
    \end{displaymath}
    then  we can find $\vec{x} \in \mathbb{Z}_{\ge 0}^n$ such that $\vec{b} =
    \vec{A} \cdot \vec{x}$ in polynomial time.
\end{theorem}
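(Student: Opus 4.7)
My plan is to reduce the problem to a polynomial-time linear program followed by a rounding step against Cramer-style \emph{circuit} kernel vectors of $\vec{A}$; the ample slack $t\ge (n-d)\Delta^d$ will absorb the rounding error so that the rounded lattice correction stays inside the positive orthant.

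Concretely, I proceed as follows. First, by polynomial-time linear programming I obtain a rational $\vec{y}^{*}\in\Q^n$ with $\vec{A}\vec{y}^{*}=\vec{b}$ and $\vec{y}^{*}\ge t\vec{1}$; the feasible region is nonempty by hypothesis. Second, exploiting $\vec{b}\in\lat(\vec{A})$ via the Hermite Normal Form of $\vec{A}$, I produce in polynomial time some (possibly sign-indefinite) integer solution $\vec{y}_{0}\in\Z^n$ to $\vec{A}\vec{y}_{0}=\vec{b}$. Third, assuming without loss of generality that $\vec{A}$ has full row rank $d$ (otherwise restrict to a maximal independent subset of rows), I pick an index set $S\subset\{1,\ldots,n\}$ of size $d$ with $\vec{A}_{S}$ nonsingular, and for each $j\notin S$ form by Cramer's rule the integer circuit vector $\vec{u}_{j}\in\ker(\vec{A})\cap\Z^n$ supported on $S\cup\{j\}$ whose nonzero entries are signed $d\times d$ minors of $\vec{A}$. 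By Hadamard's inequality, $\norm{\vec{u}_{j}}_{\infty}\le\Delta^d$, and the $n-d$ vectors $\{\vec{u}_{j}\}_{j\notin S}$ form a $\Q$-basis of $\ker(\vec{A})$.

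Fourth, write the fractional offset $\vec{w}:=\vec{y}^{*}-\vec{y}_{0}\in\ker_{\R}(\vec{A})$ uniquely as $\vec{w}=\sum_{j\notin S}\beta_{j}\vec{u}_{j}$, round each coefficient, and set $\vec{v}:=\sum_{j\notin S}\round{\beta_{j}}\vec{u}_{j}\in\ker(\vec{A})\cap\Z^n$ and $\vec{x}:=\vec{y}_{0}+\vec{v}$. Then $\vec{A}\vec{x}=\vec{b}$ and $\vec{x}\in\Z^n$ by construction, and by the triangle inequality
\begin{displaymath}
\norm{\vec{w}-\vec{v}}_{\infty}\le\tfrac{1}{2}\sum_{j\notin S}\norm{\vec{u}_{j}}_{\infty}\le\tfrac{n-d}{2}\Delta^d\le t,
\end{displaymath}
so coordinate-wise $\vec{x}=\vec{y}^{*}-(\vec{w}-\vec{v})\ge\vec{y}^{*}-t\vec{1}\ge\vec{0}$, giving the required non-negative integer solution.

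The conceptual subtlety I expect to be the main point to explain is that the circuit vectors $\{\vec{u}_{j}\}_{j\notin S}$ only generate a sublattice of $\ker(\vec{A})\cap\Z^n$, so this procedure cannot in principle reach every integer solution of $\vec{A}\vec{x}=\vec{b}$; but this is harmless, because we only need a single non-negative solution and the rounding inequality above certifies that the particular $\vec{v}$ constructed lands in the positive orthant. With this observation in hand, the remaining bit-complexity bookkeeping for LP, HNF, and the $(n-d)$ cofactor computations is classical and yields the claimed polynomial running time in the bit-length of $(\vec{A},\vec{b},t)$, without ever invoking exact CVP or LLL on a full basis of $\ker(\vec{A})\cap\Z^n$.
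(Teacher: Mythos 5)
Your proof is correct, and it takes a genuinely different route from the paper's.

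The paper decomposes $\vec{b}$ in the image: it writes $\vec{b} = \vec{v} + \vec{w}$ with $\vec{v} = \sum_i \lfloor\alpha_i\rfloor\veca_i$ and $\vec{w} = \sum_i \{\alpha_i\}\veca_i \in \lat(\vec{A})$, then iteratively ``peels off'' columns $\veca_n,\ldots,\veca_{d+1}$ from $\vec{w}$ by a CRT-style computation that finds, for each $k$, the smallest $\beta_k \ge 0$ with $\vec{w} - \sum_{j>k}\beta_j\veca_j \in \lat(\veca_1,\ldots,\veca_{k-1})$, bounding $\beta_k$ by the index of the sublattice $\lat(\veca_1,\ldots,\veca_{k-1})$; the last $d$ coefficients are then determined uniquely and bounded by projecting orthogonally to $\veca_1,\ldots,\veca_{d-1}$. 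Your approach instead works in the kernel: you obtain one integer particular solution $\vec{y}_0$ by HNF, span $\ker_{\Q}(\vec{A})$ by the $n-d$ Cramer/circuit vectors $\vec{u}_j$ supported on $S\cup\{j\}$ whose entries are signed $d\times d$ minors (hence $\norm{\vec{u}_j}_\infty \le \Delta^d$ by Hadamard), and round the unique kernel coordinates of $\vec{y}^* - \vec{y}_0$ in this basis; the rounding error is absorbed by the slack $t$ coordinate-wise. The triangle-inequality bound $\tfrac{1}{2}\sum_{j\notin S}\norm{\vec{u}_j}_\infty \le \tfrac{n-d}{2}\Delta^d \le t$ and the ample margin $\vec{y}^* \ge t\vec{1}$ do the rest, and your side-remark about circuits generating only a finite-index sublattice of $\ker(\vec{A})\cap\Z^n$ is the right thing to flag and correctly dismissed: no completeness of the sublattice is needed, only that one rounded representative is non-negative.

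What each approach buys: your argument is shorter and more elementary (no CRT machinery, no volume telescoping), saves a factor of two in the constant, and isolates the key geometric fact --- a kernel basis of bounded $\ell_\infty$ norm --- cleanly. The paper's argument is built to prove the stronger Theorem~\ref{thm:pre-main}, where the large slack $t$ is required only on the $d$ coordinates indexing a nonsingular block, with the remaining coordinates merely non-negative, and where the bound is parameterized by the finer quantity $V$ (the maximum $(d-1)$-dimensional subdeterminant) rather than $\Delta^d$; your nearest-integer rounding can leave a residual of size up to $\tfrac{1}{2}|\det(\vec{A}_S)|$ on the coordinates $j\notin S$, so to recover the sharper one-sided statement one would switch to ceiling (resp.\ floor) rounding on each $\beta_j$ according to the sign of $(\vec{u}_j)_j$, ensuring those coordinates only go up, at the cost of doubling the error in the $S$-block --- which is exactly the $(n-d)V\Delta$-type bound the paper obtains.
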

The proof idea for this result is as follows. We first use linear programming to find a real vector $\vec{y}  \ge t \cdot \vec{1}$ such that $\vec{b} = \vec{A} \cdot \vec{y}$. We then let $\vec{b} = \vec{v} + \vec{w}$, where $\vec{v} = \vec{A} \vec{z}$ for an integer vector $\vec{z}$, and $\vec{w} = \vec{A} \vec{u}$, where each coordinate of $\vec{u}$ is between $0$, and $1$. We know that $\vec{w} = \vec{b}- \vec{v}$ is in the lattice $\lat(\vec{A})$. Then we iteratively find the smallest non-negative integers $\beta_n, \ldots, \beta_{d+1}$ such that for any $i \in \{d+1, \ldots, n\}$, $\vec{w} - \sum_{j=i}^n \beta_i \vec{a}_i$ is in $\lat(\veca_1, \ldots, \veca_{i-1})$. Finally, there is a unique choice for $\beta_1, \ldots, \beta_d$. To conclude the proof, we show that $\beta_1, \ldots, \beta_d \ge -t$.

We note that~\cref{thm:simplified} implies that $g(\vec{A}) \le (n-d) \cdot \Delta^d$, hence it
offers an alternative bound on the diagonal
Frobenius number. Our result matches (up to the polynomial factors in $n$) the
currently best bound ~\eqref{eq:fritz}
of~\cite{bach2024forallexiststatementspseudopolynomialtime}. Moreover, our
result is algorithmic and allows us to find a solution in polynomial time.

We also give a closely matching lower bound for the diagonal Frobenius number.
In~\cref{thm:counter-example}, we show that our bound on $g(\vec{A})$ is tight
up to polynomial factors in $n$. 

\begin{theorem}[Theorem~\ref{thm:counter-example} simplified]\label{thm:simplified}
    For every $d \ge 2$ let $t = \frac{\Delta^d}{20 d}$. There exists 
    $\vec{b} \in \Z^d$ and constraint matrix $\vec{A} \in \Z^{d \times n}$ with column vectors $\veca_1, \ldots,
    \veca_n$ and  $\Delta = \max_{i=1}^n \ceil{\norm{\veca_i}}$ with the
    following property:
    \begin{displaymath}
         \vec{b} \in \lat(A) \cap \{
        \vec{A} \cdot \vec{x} \mid \vec{x} \ge t \cdot \vec{1} \}
    \end{displaymath}
    but there does not exist 
    $\vec{x} \in \mathbb{Z}_{\ge 0}$ such that $\vec{b} =
    \vec{A} \cdot \vec{x}$.
\end{theorem}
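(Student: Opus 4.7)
The plan is to exhibit an explicit pair $(\vec{A}, \vec{b})$ with $n = d+1$ that realizes the claimed gap, by engineering the one-dimensional integer kernel of $\vec{A}$ so that its primitive generator has at least two coordinates of magnitude $\Theta(\Delta^d)$ with opposite signs. Since $n = d+1$, the integer kernel $\ker_{\Z}(\vec{A}) = \Z \vec{w}$ is generated by a single primitive vector $\vec{w} \in \Z^{d+1}$ whose coordinates are, up to sign and a common factor $g = \gcd_{i} |\det(\vec{A}_{\hat i})|$, the $d\times d$ minors $\det(\vec{A}_{\hat i})$; each such minor has absolute value at most $\Delta^d$ by Hadamard's inequality.

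First, I would design $\vec{A}$ so that $\lceil \|\veca_i\| \rceil = \Delta$ for every column while forcing (a) $g = 1$ and (b) at least one positive and one negative coordinate of $\vec{w}$ of magnitude $\Theta(\Delta^d)$. A concrete template is the cyclic-shift construction $\veca_i = (\Delta - 1) e_i + e_{(i \bmod d)+1}$ for $i = 1, \ldots, d$, which gives $\det(\veca_1, \ldots, \veca_d) = (\Delta - 1)^d \pm 1 = \Theta(\Delta^d)$, together with an auxiliary column $\veca_{d+1}$ of norm at most $\Delta$ chosen so that the remaining minors $\det(\vec{A}_{\hat i})$ for $i \le d$ are coprime to $(\Delta - 1)^d \pm 1$ and include at least one of the opposite sign and magnitude $\Theta(\Delta^d)$. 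A short number-theoretic argument or an explicit choice of $\veca_{d+1}$ confirms this is always achievable.

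Next, I would pick an integer $\vec{x}_0 \in \Z^{d+1}$ so that the real feasibility interval
\begin{displaymath}
    I = \{\alpha \in \R : \vec{x}_0 + \alpha \vec{w} \ge \vec{0}\}
\end{displaymath}
is a non-empty subset of $(0, 1)$ and admits an interior point at depth at least $t = \Delta^d/(20d)$. Concretely, set $(\vec{x}_0)_i = |w_i| - 1$ for each $i$ with $w_i < 0$ and $(\vec{x}_0)_i = -1$ for each $i$ with $w_i > 0$. Then $\alpha_{\min} = 1/w_{\min}^+$ and $\alpha_{\max} = 1 - 1/w_{\min}^-$, both in $(0, 1)$, where $w_{\min}^+$ is the smallest positive entry of $\vec{w}$ and $w_{\min}^-$ is the smallest absolute value among its negative entries. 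A direct max-min computation along the fiber line $\vec{x}_0 + \R \vec{w}$ at the appropriate tie point gives depth of order $w_{\min}^+ w_{\min}^- / (w_{\min}^+ + w_{\min}^-)$, which is $\Omega(\Delta^d/d)$ whenever $w_{\min}^{\pm} = \Omega(\Delta^d/d)$. Setting $\vec{b} = \vec{A}\vec{x}_0$ places $\vec{b} \in \lat(\vec{A})$ automatically and yields $\vec{b} = \vec{A}\vec{y}^*$ for the depth-maximizing real point $\vec{y}^*$.

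Finally, any non-negative integer solution to $\vec{A}\vec{x} = \vec{b}$ has the form $\vec{x}_0 + k \vec{w}$ for some integer $k \in I$; since $I \subset (0, 1)$ and $(0,1) \cap \Z = \emptyset$, no such solution exists. The main obstacle is step one: ensuring both the primitivity of $\vec{w}$ (i.e., $g = 1$) and the existence of two coordinates of magnitude $\Theta(\Delta^d)$ of opposite signs, since a large common factor of the minors could rescale $\vec{w}$ down to have much smaller entries. This is a careful Diophantine design of the extra column $\veca_{d+1}$, but once secured, the remaining verifications of the depth bound and the integer-exclusion property are elementary.
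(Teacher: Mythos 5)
Your high-level framework is sound and is in fact a legitimate dual view of the paper's construction: with $n=d+1$ the integer kernel is a line $\Z\vec{w}$, any integer solution lies on the fiber $\vec{x}_0 + \Z\vec{w}$, and one wins by making $I = \{\alpha : \vec{x}_0 + \alpha\vec{w}\ge\vec 0\}$ a subinterval of $(0,1)$ that nevertheless contains a point of depth $\Omega(\Delta^d/d)$. Your depth formula $w_{\min}^+w_{\min}^-/(w_{\min}^++w_{\min}^-)$ is correct, and the paper's $\vec{A}$ (with $\vec{a}_i = p_i\vec e_i$ for distinct primes $p_i\approx\Delta$ and $\vec{a}_{d+1}=p_{d+1}\vec 1$ with $p_{d+1}\approx\Delta/\sqrt d$) is precisely an instance of your recipe: its primitive kernel vector is, up to sign, $\vec{w} = (p_{d+1}P/p_1,\ldots,p_{d+1}P/p_d,-P)$ with $P = p_1\cdots p_d$, so every coordinate has magnitude $\Theta(\Delta^d/\sqrt d)$ and the gcd of the minors is $1$ because the primes are distinct; the paper's target $\vec b$ is $\vec A\vec{x}_0$ for $\vec{x}_0$ exactly of the form you describe (after a shift by $\vec w$).

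The gap is that you never produce such a $\vec{w}$, and your concrete template does not. Two issues. First, your stated requirement, that $\vec{w}$ have ``at least one positive and one negative coordinate of magnitude $\Theta(\Delta^d)$,'' is too weak: your own depth bound needs \emph{every} nonzero coordinate of $\vec{w}$ to be $\Omega(\Delta^d/d)$, since a single small entry $w_i$ caps the depth at roughly $|w_i|$. Second, the cyclic-shift base $\vec{a}_i = (\Delta-1)\vec e_i + \vec e_{(i\bmod d)+1}$ has the built-in short relation $\vec{a}_1+\cdots+\vec{a}_d = \Delta\vec 1$, which forces the $d\times d$ minors to share a large factor for natural choices of $\vec{a}_{d+1}$. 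For $d=2$: $\vec{a}_1=(\Delta-1,1)$, $\vec{a}_2=(1,\Delta-1)$, $m_3 = (\Delta-1)^2-1 = \Delta(\Delta-2)$; taking $\vec{a}_3=(c,-c)$ (the natural ``opposite-sign'' choice of maximal norm) gives $m_1=m_2=-c\Delta$ and $\gcd(m_1,m_2,m_3)\ge\Delta$, so the primitive $\vec{w}$ has entries $O(\Delta)$, not $\Theta(\Delta^2)$, and the construction fails. Breaking this common factor requires exactly the kind of careful Diophantine choice you defer to ``a short number-theoretic argument,'' but that choice is the entire substance of the theorem; the paper's use of distinct primes is what supplies it. As written, the proposal reduces the problem to an unproven existence claim about $\vec{a}_{d+1}$ and so does not constitute a proof.
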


As a consequence,~\cref{thm:simplified} implies that $g(\vec{A}) \ge
\Omega(\Delta^d/d)$ and the
bound~\eqref{eq:fritz}
of~\cite{bach2024forallexiststatementspseudopolynomialtime} is tight (up to
polynomial factor in $d$). We note, that Aliev et al.~\cite{aliev2010feasibility}
also show a lower-bound on diagonal Frobenius number, but their result
is presented in terms of the $\det(\vec{A} \transpose{\vec{A}})$ parameter (see
also~\cite{aliev2011expected}). 

\section{Preliminaries}

\subparagraph*{Notation}
We denote by $\Z$ the set of integers and by $\N$ the set of non-negative integers. For a real number $\alpha$, we denote by $\lfloor \alpha \rfloor$, the largest integer less than or equal to $\alpha$, by $\lceil \alpha \rceil$, the smallest integer greater than or equal to $\alpha$, and by $\{\alpha\}$, the fractional part of $\alpha$, i.e., $\alpha - \lfloor \alpha \rfloor$. 

For $d \leq n$, a lattice $\lat \subset \R^d$ is the set of all integer linear combinations of $n$ vectors $\vec{A} = (\vec{a}_1,\ldots, \vec{a}_n) \in \R^{d \times n}$, i.e.,
\[
    \lat = \{ z_1 \vec{a}_1 + z_2 \vec{a}_2 + \cdots + z_n \vec{a}_n \ : \ z_i \in \Z\}
    \; .
\]
We also use the notation $\lat (\vec{a}_1,\ldots, \vec{a}_n)$ for $\lat.$
A basis of the lattice $\lat$ is $\vec{B} \in \R^{d \times n'}$ such that the column vectors of $\vec{B}$, $\vec{b}_1,\ldots, \vec{b}_{n'}$ are linearly independent and 

\[
    \lat = \lat(\vec{b}_1,\ldots, \vec{b}_{n'})
    \; .
\]
This is equivalent to saying that each of the vectors $\vec{a}_1, \ldots, \vec{a}_n$ can be written as integer combinations of $\vecb_1, \ldots, \vec{b}_{n'}$. 
An important geometric quantity associated with a lattice $\lat \subset \R^d$ is the determinant, $\det(\lat) := \det(\basis^T \basis)^{1/2}$. The determinant of the lattice is not dependent on the basis. If the basis is full rank, i.e., $n' = d$, then it is easy to see that $\det(\lat) = |\det(\basis)|$. Throughout the paper, $\|\vec{x}\| := (x_1^2 + \cdots + x_d^2)^{1/2}$ denotes the Euclidean norm of $\vecx \in \R^d.$

\subparagraph*{Definition of the Problems}
\begin{definition}[\ILP\ (\ILPs)]\label{def:ILP}
    On input $d, n \in \N$, $\vec{A}\in \Z^{d \times n}$, $\vec{b} \in \Z^{d}$, define the polytope $\mathcal{K}=\{\vec{x}\in \R^n: \vec{A}\vec{x} \leq \vec{b}\}$. The task is to find $\vec{x} \in \mathcal{K} \cap \N^n$ or output $\bot$ if $\mathcal{K} \cap \N^n = \emptyset$.
\end{definition}
If we replace the constraints by equality constraints, we get the following variant of the integer linear programming problem. 
\begin{definition}[\ILPE\ (\ILPEs)]\label{def:EqualityILP}
    On input $d, n \in \N$, $\vec{A}\in \Z^{d \times n}$, $\vec{b} \in \Z^{d}$, define the polytope $\mathcal{K}=\{\vec{x}\in \R^n: \vec{A}\vec{x} = \vec{b}\}$. The task is to find $\vec{x} \in \mathcal{K} \cap \N^n$ or output $\bot$ if $\mathcal{K} \cap \N^n = \emptyset$.
\end{definition}
Notice that the two variants of ILP mentioned above are computationally equivalent. 
\begin{itemize}
    \item An ILPE $\vec{A}\vec{x} = \vec{b}$ with $d$ equality constraints can be reduced to an ILP with $2d$ inequality constraints $\vec{A}\vec{x} \le \vec{b}$ and $-\vec{A}\vec{x} \le -\vec{b}$.
    \item An ILP with $d$ inequality constraints on $n$ variables $\vecx = (x_1, \ldots, x_n)$ given by $\vec{A}\vec{x} \le \vec{b}$ can be reduced an ILP with $d$ equality constraints on $n+d$ variables $(\vecx, \vecy) = (x_1, \ldots, x_n, y_1, \ldots, y_d)$ given by $\vec{A}\vec{x} + \vecy = \vec{b}$. 
\end{itemize}

The \USS problem is a special case of ILPE where $d=1$, and $\vec{A}$ has non-negative entries. 
\begin{definition}[\USS\ (\USSs)]
The (search version) of {\rm \USSs} is defined as follows: On input $n \in \N, a_1, \ldots, a_n, b \in \N$, find $x_1, \ldots, x_n \in \N$ such that $\sum_{i=1}^n a_ix_i = b$.
\end{definition}

Finally, we may have a combination of equality and inequality constraints to get the following variant. 

\begin{definition}[\MILP\ (\MILPs)]\label{def:MixedILP}
    On input $d_1, d_2, n \in \N$, $\vec{A}_1\in \Z^{d_1 \times n}$, $\vec{A}_2\in \Z^{d_2 \times n}$, $\vec{b}_1 \in \Z^{d_1}$ and $\vec{b}_2 \in \Z^{d_2}$, define the polytope $\mathcal{K}:=\{\vec{x}\in \R^n: \vec{A}_1\vec{x} \le  \vec{b}_1, \; \vec{A}_2 \vec{x} = \vec{b}_2\}$. The task is to find $\vec{x} \in \mathcal{K} \cap \N^n$ or output $\bot$ if $\mathcal{K} \cap \N^n = \emptyset$.
\end{definition}

\subparagraph*{Known Results}
We will use the following simplified version of a result due to Erd\H{o}s and Graham~\cite{erdos1972linear}, which was later improved by Dixmier~\cite{dixmier1990proof}.
\begin{theorem}
\label{thm:erdos_graham}
The {\rm \USSs} instance $(a_1, \ldots, a_n, b)$ has a solution if 
$gcd(a_1, \ldots, a_n)$ divides $b$, and 
$b \ge \frac{a_n^2}{n-1}$.
\end{theorem}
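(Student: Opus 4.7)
The plan is to deduce Theorem~\ref{thm:erdos_graham} from the stronger Frobenius-number bound $g(a_1, \ldots, a_n) < a_n^2/(n-1)$, via an Apéry-set and pigeonhole argument in the spirit of Erd\H{o}s--Graham and Dixmier. We would first reduce to the case $\gcd(a_1, \ldots, a_n) = 1$: letting $d$ denote the gcd and replacing $a_i \mapsto a_i/d$ and $b \mapsto b/d$ (which is legal since $d \mid b$), the hypothesis $b \ge a_n^2/(n-1)$ becomes $b/d \ge a_n^2/(d(n-1)) \ge (a_n/d)^2/(n-1)$, so it survives the scaling.

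Next, we apply the Brauer--Shockley reduction. For each residue $r \in \{0, \ldots, a_n - 1\}$, let $m_r$ be the smallest element of the numerical semigroup $\langle a_1, \ldots, a_n \rangle$ congruent to $r$ modulo $a_n$. Since adjoining $a_n$ preserves residues while only increasing the value, each $m_r$ is in fact expressible using only $a_1, \ldots, a_{n-1}$. Moreover, $b$ is representable if and only if $b \ge m_{b \bmod a_n}$, so it suffices to prove $\max_r m_r \le a_n^2/(n-1)$. To this end we would analyze the level sets
\[
L_k = \Bigl\{ \textstyle\sum_{i=1}^{n-1} c_i a_i \bmod a_n : c_i \in \N,\ \sum c_i \le k \Bigr\} \subseteq \Z/a_n\Z.
\]
Since $\gcd(a_1, \ldots, a_{n-1}, a_n) = 1$, the generators generate $\Z/a_n\Z$, so the chain $L_0 \subsetneq L_1 \subsetneq \cdots$ saturates at $\Z/a_n\Z$. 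The key claim is $|L_k| \ge \min(a_n, 1 + k(n-1))$, which would imply $L_k = \Z/a_n\Z$ once $k \ge \lceil (a_n - 1)/(n-1) \rceil$. Granting the claim, every $r$ admits coefficients with $\sum c_i \le a_n/(n-1)$, whence $m_r \le (a_n/(n-1)) \cdot a_{n-1} < a_n^2/(n-1) \le b$; writing $b - m_r$ as a non-negative multiple of $a_n$ then completes the representation.

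The main obstacle is the increment bound $|L_k| - |L_{k-1}| \ge n - 1$ whenever $L_{k-1}$ is not yet saturated. Since $L_k = L_{k-1} + \{0, a_1, \ldots, a_{n-1}\}$, Kneser's theorem will give $|L_k| \ge |L_{k-1}| + n - |H|$, where $H \subseteq \Z/a_n\Z$ is the stabilizer of $L_k$. If $H = \{0\}$, this immediately yields the required increment of $n - 1$. The delicate case is a non-trivial stabilizer $H \neq \{0\}$: then $L_k$ is a union of $|H|$-cosets and the per-step additive increment may drop to $n - |H|$; we will have to amortize by observing that once the stabilizer becomes $H$ it persists, forcing $L_{k-1}$ itself to be a union of $|H|$-cosets, so whole cosets of size $|H|$ are absorbed per step. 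Combining these two regimes to obtain the cumulative diameter bound $\lceil (a_n-1)/(n-1) \rceil$ will be the technical heart of the argument.
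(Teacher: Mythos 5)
The paper itself does not prove this theorem: it is stated as a ``simplified version'' of the Erd\H{o}s--Graham bound (1972), later refined by Dixmier (1990), and is imported by reference. There is therefore no internal proof to compare against, so your proposal has to be assessed on its own.

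Your gcd reduction and the Brauer--Shockley step are fine, but the central combinatorial claim is false. You assert $|L_k| \ge \min\bigl(a_n,\ 1 + k(n-1)\bigr)$ for the iterated sumset $L_k = kS$ with $S = \{0, a_1, \ldots, a_{n-1}\} \subseteq \mathbb{Z}/a_n\mathbb{Z}$, which would bound the Cayley diameter by $\lceil (a_n-1)/(n-1)\rceil$. Take $n = 4$ and $(a_1,a_2,a_3,a_4) = (m-1,\ m,\ 2m-1,\ 2m)$ for any $m \ge 5$, so $a_n = 2m$, $\gcd = 1$, and $S = \{0,\ m-1,\ m,\ 2m-1\}$. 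Here $S = H \cup \bigl((m-1)+H\bigr)$ for the subgroup $H = \{0, m\}$, so every $L_k$ ($k\ge 1$) is $H$-periodic and its image in $\mathbb{Z}/m\mathbb{Z}$ is $k\{0,-1\}$, which gains exactly one element per step. Hence $|L_k| = 2(k+1)$ for $1 \le k \le m-1$, and $L_k = \mathbb{Z}/2m\mathbb{Z}$ only at $k = m-1$: the diameter is $m-1$, far above $\lceil(2m-1)/3\rceil \approx 2m/3$, and already $|L_2| = 6 < 7 = 1+2(n-1)$. This is exactly the nontrivial-stabilizer case you flag as ``delicate'', but the amortization you sketch cannot close it. Kneser gives $|L_k| \ge |L_{k-1}+H| + |S+H| - |H|$, and when $S$ meets only two $H$-cosets the increment is just $|H|$, here $2 < n-1 = 3$; nothing in the hypotheses forces $|H| \ge n-1$. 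The damage propagates to your final bound: you would get only $m_r \le a_{n-1}\cdot(m-1) = (2m-1)(m-1) \sim 2m^2$, which exceeds $a_n^2/(n-1) \sim 4m^2/3$ for large $m$, so $b \ge a_n^2/(n-1)$ no longer yields $b \ge m_r$. (The theorem itself of course still holds in this family, since the semigroup collapses to $\langle m-1, m\rangle$ with Frobenius number $m^2-3m+1 < 4m^2/3$; the point is that your route does not see this.) A proof of the Erd\H{o}s--Graham bound has to exploit the actual integer values $a_1,\ldots,a_{n-1}$, not merely their residues mod $a_n$ together with a count of summands; it does not reduce to an unweighted diameter estimate via Kneser in $\mathbb{Z}/a_n\mathbb{Z}$.
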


Additionally, we will need the following algorithm due to~\cite{DBLP:conf/focs/ReisR23} which is the state of the art algorithm for the ILP problem, and hence for all the variants mentioned above. 

\begin{theorem}
\label{thm:alg_ip}
There is an algorithm for the {\rm ILP} problem that runs in time $(\log
n)^{O(n)} \cdot \poly(|\cI|)$, where $n$ is the number of unknown variables, and $|\cI|$ is the total bitlength of the input instance $\vec{A}, \vec{b}, n, d$.
\end{theorem}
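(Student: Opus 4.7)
The plan is to follow the Lenstra paradigm for integer programming, relying on state-of-the-art bounds on the (subspace) flatness constant to obtain the stated running time. Given an \ILPs\ instance $\vec{A}\vec{x}\le \vec{b}$, let $K=\{\vec{x}\in\R^n:\vec{A}\vec{x}\le\vec{b}\}$. First I would test whether $K$ itself is nonempty by a call to linear programming, and compute a L\"owner--John ellipsoid $E\subseteq K\subseteq cn\cdot E$ via the ellipsoid method in $\poly(|\cI|)$ time. If $E$ is ``large enough'' --- meaning it contains a translate of a ball whose radius exceeds the covering radius of $\Z^n$ in the metric induced by $E$ --- a rounding step produces an integer point of $K$ directly, and we are done.

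Otherwise, the core step is the algorithmic flatness dichotomy. Using the quadratic form defining $E$, construct a lattice whose shortest (nonzero) vector $\vec{w}\in \Z^n$ certifies a direction in which $K$ is \emph{flat}: the integer width
\[
w(K,\vec{w})\;=\;\max_{\vec{x}\in K}\vec{w}^\top\vec{x}\;-\;\min_{\vec{x}\in K}\vec{w}^\top\vec{x}
\]
is bounded by the flatness constant $\omega(n)$ of the ambient dimension. Such a $\vec{w}$ is found by an SVP solver on this lattice; since we only need a short vector within a factor of $\omega(n)$ of the minimum, an LLL/BKZ-type call suffices in polynomial time. The crucial input from the deep theory is that one can take $\omega(n)=(\log n)^{O(1)}$; this is precisely the subspace-flatness bound proved by Reis--Rothvoss via a reverse Minkowski / discrete Gaussian argument, refining earlier bounds of Kannan and of Dadush et al.

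Having $\vec{w}$ in hand, I would slice: every integer point of $K$ lies on one of at most $\omega(n)+1$ parallel hyperplanes $\vec{w}^\top\vec{x}=c$, $c\in\Z\cap[\min,\max]$. For each such $c$ I perform a unimodular change of variables so that the hyperplane becomes the coordinate subspace of the first $n-1$ variables (extending $\vec{w}$ to a basis of $\Z^n$ by the Hermite normal form), reducing the feasibility question to an \ILPs\ instance in dimension $n-1$ whose bit-length grows by $\poly(|\cI|)$. Recursing, the branching factor multiplies to $\prod_{i=1}^n\omega(i)=(\log n)^{O(n)}$, and the per-node work is $\poly(|\cI|)$, giving the bound. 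The main obstacle --- and the reason this was not achievable before 2023 --- is Step 2: weaker flatness bounds $\omega(n)=n^{O(1)}$ give only $n^{O(n)}$ time as in Kannan, and the Lenstra bound $\omega(n)=2^{O(n)}$ gives $2^{O(n^2)}$; obtaining $(\log n)^{O(n)}$ requires the polylogarithmic (subspace) flatness theorem, whose algorithmic version powers the whole recursion.
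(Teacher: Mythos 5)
The paper does not prove this statement at all; it is imported verbatim as a known result of Reis and Rothvoss~\cite{DBLP:conf/focs/ReisR23}, so the only question is whether your sketch would stand on its own as a justification. As written, it does not: the step that does all the work is wrong. You slice along a single flat direction $\vec{w}$ and claim one may take the (one-dimensional) flatness constant $\omega(n)=(\log n)^{O(1)}$. This is false, not merely unproved: the lattice width of hollow convex bodies can be $\Omega(n)$ (hollow simplices already achieve width linear in $n$), so the best possible single-direction flatness constant is at least linear, and the best known upper bound is only $\widetilde{O}(n^{4/3})$. With any $\omega(n)=n^{\Theta(1)}$, your recursion with branching factor $\omega(i)+1$ per level gives $n^{O(n)}$ --- exactly Kannan's bound --- and no choice of SVP/LLL subroutine changes that.

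What Reis and Rothvoss actually prove and use is a \emph{subspace} flatness statement (the $\ell_2$ case of the Kannan--Lov\'asz conjecture, built on the reverse Minkowski theorem): if $K$ contains no lattice point (or more generally at the recursive step), there is a projection onto a $k$-dimensional lattice quotient, for some $k\ge 1$ that the algorithm cannot fix to $1$, in which the projected body meets only $(\log n)^{O(k)}$ lattice points. The algorithm enumerates these $(\log n)^{O(k)}$ fibers and recurses in dimension $n-k$, so the branching cost amortizes to $(\log n)^{O(n)}$ over the whole recursion; the polylogarithmic saving is only available per \emph{block} of $k$ dimensions, never per single coordinate direction. Finding the right projection is also not a single LLL call --- it requires the machinery around stable lattices/discrete Gaussians (any $2^{O(n)}$-time subroutine is affordable here since $2^{O(n)}\le (\log n)^{O(n)}$ for large $n$, but the object being computed is a $k$-dimensional sublattice projection, not a shortest vector). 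So your outer Lenstra-style framework is the right skeleton, but the dichotomy you invoke in Step 2 is not available in the one-dimensional form you use, and repairing it requires restructuring the recursion around multi-dimensional projections.
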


It is folklore that the LLL algorithm can also be run on the generating
families~\cite{DBLP:conf/stoc/Kannan83}. In particular, it is implemented
in~\cite{FPLLL}. For more recent discussion
see~\cite{DBLP:conf/eurocrypt/BennettGPS23}.

\begin{theorem}
\label{thm:findingbasis}
There is a polynomial time algorithm that given $\veca_1, \ldots, \veca_n \in \Z^d$ finds a basis of the lattice generated by all integer combinations of $\veca_1, \ldots, \veca_n$. 
\end{theorem}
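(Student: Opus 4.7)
The plan is to compute a Hermite Normal Form (HNF) of the integer matrix $\vec{M} = [\veca_1 \mid \cdots \mid \veca_n] \in \Z^{d \times n}$ and return its nonzero columns as the desired basis. Recall that the HNF of $\vec{M}$ is obtained by right-multiplying $\vec{M}$ by a unimodular matrix $\vec{U} \in \Z^{n \times n}$, producing a column-echelon form $\vec{H} = \vec{M}\vec{U}$. Because $\vec{U}$ is unimodular, the set of integer combinations of the columns of $\vec{H}$ coincides with $\lat(\veca_1,\ldots,\veca_n)$; and by the echelon structure, the nonzero columns of $\vec{H}$ are $\Q$-linearly independent. Hence returning them gives a basis of $\lat(\veca_1,\ldots,\veca_n)$.

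Concretely, I would first compute the $\Q$-rank $r$ of $\vec{M}$ by Gaussian elimination over the rationals. Then I would apply the Kannan--Bachem HNF procedure: sweep the columns from left to right, and at each stage use the extended Euclidean algorithm on the current pivot row to reduce all but one of the entries on that row to zero, then size-reduce the non-pivot entries above. The output matrix $\vec{H}$ has exactly $r$ nonzero columns, which are returned as the basis.

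The main obstacle is bit-complexity rather than correctness. A naive HNF sweep can cause intermediate entries to blow up super-polynomially: the extended Euclidean cofactors multiply together through the elimination and grow exponentially if unchecked. The Kannan--Bachem refinement circumvents this by reducing all intermediate entries modulo the determinant of the sublattice spanned by the columns already in echelon form; this keeps every number of polynomially bounded bit-length and makes the overall procedure run in polynomial time in the total input size. An alternative, matching the reference to Kannan mentioned in the excerpt, is to run LLL directly on the generating family $\veca_1,\ldots,\veca_n$: an LLL-style reduction applied to a generating set detects $\Z$-linear dependencies as zero vectors in the reduced output, and after discarding these zero vectors one is left with an LLL-reduced basis of the same lattice, again in polynomial time. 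Either route establishes the theorem.
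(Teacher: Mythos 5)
Your proposal is correct. Note that the paper does not prove this statement at all: it treats it as known, citing the folklore fact that LLL can be run directly on a generating family (Kannan; the fpLLL implementation; Bennett et al.) --- which is exactly the alternative route you sketch in your last sentences. Your primary argument via the Hermite Normal Form is therefore a genuinely different, and more self-contained, justification: the observation that $\vec{H}=\vec{M}\vec{U}$ with $\vec{U}$ unimodular preserves the generated lattice and that the echelon structure makes the nonzero columns independent is exactly what is needed, and you correctly identify that the only real issue is coefficient growth, handled by the Kannan--Bachem modular-reduction refinement (one should be slightly careful that the standard ``reduce modulo the determinant'' trick is stated for the full-rank case, so for rank $r<d$ you work with the rank profile, as Kannan--Bachem do). What each approach buys: the HNF route is deterministic, uses only integer arithmetic, and gives transparent polynomial bit-size bounds; the MLLL-style route the paper cites additionally returns a \emph{reduced} (short) basis, which is more than the theorem asks for but is convenient when, as in Theorem~\ref{thm:pre-main}, the basis is subsequently fed into lattice computations. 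Either route establishes the theorem, so your proof stands.
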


\section{Algorithm for Unbounded Subset Sum}\label{sec:USS}

We will need the following technical lemma. 
\begin{lemma}
\label{lem:induction_hyptothesis}
Let $n, a_1 < \ldots < a_n,b$ be positive integers such that $b \ge \frac{a_n^2}{n-1}$, and $\gcd(a_1, \ldots, a_n)= d$. Then 
\[
d\left(b - a_n \cdot \left(b \cdot a_n^{-1} \pmod d \right)\right) > b \;.
\]
\end{lemma}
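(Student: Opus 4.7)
The plan is to interpret $r := (b \cdot a_n^{-1}) \bmod d$ as the unique residue in $\{0, 1, \ldots, d-1\}$ satisfying $a_n r \equiv b \pmod d$; for this to be well-defined one needs $\gcd(a_n, d) = 1$, which prompts reading the hypothesis with $d = \gcd(a_1, \ldots, a_{n-1})$ under the standing assumption $\gcd(a_1, \ldots, a_n) = 1$ (the literal reading $d = \gcd(a_1, \ldots, a_n)$ would force $d \mid a_n$, making $a_n^{-1} \bmod d$ undefined unless $d=1$). With this setup $0 \le r \le d-1$. Rearranging the target inequality $d(b - a_n r) > b$ yields $(d-1)\, b > d \cdot a_n \cdot r$, and since $r \le d - 1$, it suffices to establish the cleaner bound $b > d \cdot a_n$.

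To prove this sufficient bound I would use the observation that $a_1 < a_2 < \cdots < a_{n-1}$ are distinct positive multiples of $d$, so $a_{n-1}$ is at least the $(n-1)$-th smallest positive multiple of $d$, giving $a_{n-1} \ge (n-1) d$. Combined with $a_n > a_{n-1}$ this yields the strict integer inequality $a_n \ge (n-1) d + 1$, hence $a_n/(n-1) > d$. Substituting into the hypothesis $b \ge a_n^2/(n-1)$ gives
\[
    b \;\ge\; a_n \cdot \frac{a_n}{n-1} \;>\; a_n \cdot d.
\]
Multiplying by $d-1 > 0$ and using $r \le d-1$ to bound $(d-1) d a_n \ge r d a_n$ then produces $(d-1) b > r d a_n$, which rearranges to $d(b - a_n r) > b$, as required.

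The main subtlety I foresee is the boundary case $d = 1$: there $r = 0$ and the claim collapses to the false inequality $b > b$. Thus the lemma has content only for $d \ge 2$, and I expect it to be invoked in the inductive step of the USS algorithm only in the non-trivial branch where $\gcd(a_1, \ldots, a_{n-1}) \ge 2$; when $d = 1$ no modular adjustment is needed. Aside from this boundary issue, the only non-routine step is the multiplicative bound $a_{n-1} \ge (n-1) d$, which converts the divisibility structure into the key size estimate $b > d a_n$; everything else is elementary algebra.
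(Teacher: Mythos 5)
Your proof is correct and follows essentially the same route as the paper: the key step in both is the observation that $a_1,\ldots,a_{n-1}$ being distinct positive multiples of $d$ forces $a_n/(n-1) > d$, which combined with $r \le d-1$ and $b \ge a_n^2/(n-1)$ gives the bound; you merely package it as first establishing $b > d a_n$ rather than writing a single inequality chain. You also correctly spot two blemishes in the lemma as stated that the paper glosses over: the hypothesis should read $\gcd(a_1,\ldots,a_{n-1}) = d$ (under the standing normalization $\gcd(a_1,\ldots,a_n)=1$) for $a_n^{-1} \bmod d$ to make sense, and the strict inequality genuinely fails when $d=1$ (both sides equal $b$, and the paper's final step $(d-1)(\cdot) > 0$ degenerates to $0 > 0$). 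Neither affects the use in Theorem~\ref{thm:USS}, since the $d=1$ branch sets $x_n=0$ and non-strict inequality suffices there, but they are real errata in the lemma's statement and proof.
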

\begin{proof}
    Note that since $a_1, \ldots, a_{n-1}$ are distinct multiples of $d$, and $a_n > a_{n-1} > \cdots > a_1$, we have that $a_n/(n-1) > d$. Thus:
    \begin{align*}
      d\left(b - a_n \cdot \left(b \cdot a_n^{-1} \pmod d \right)\right) - b &= (d-1) b - a_n d \left(b \cdot a_n^{-1} \pmod d \right) \\
      &\ge (d-1)b - a_n d(d-1) 
      = (d-1)(b - a_n d) \\
      &\ge (d-1) \left(\frac{a_n^2}{n-1} - a_n d\right) >
       (d-1) \left(a_n d - a_n d\right) =0,
    \end{align*}
    as needed. 
\end{proof}

We now present the main result of this section, which is an algorithm that finds
a solution for the \USS problem when we have a stronger hypothesis on the input than the Erd\H{o}s-Graham condition which already guarantees the existence of a solution.
The stronger the condition, the faster the running time of the algorithm.

\thmUSS*
\begin{proof}
A solution always exists because of Theorem~\ref{thm:erdos_graham}. We assume without loss of generality that $\gcd(a_1, \ldots, a_n) = 1$ since if this is not the case, then it is equivalent to find a solution to an instance where we replace $a_1, \ldots, a_n, b$ by $\frac{a_1}{\gcd(a_1, \ldots, a_n)}, \ldots, \frac{a_n}{\gcd(a_1, \ldots, a_n)}, \frac{b}{\gcd(a_1, \ldots, a_n)}$ since $\frac{b}{\gcd(a_1, \ldots, a_n)} \ge \frac{a_i^2}{(i-1)\gcd(a_1, \ldots, a_n)^2}$ if $b \ge \frac{a_i^2}{i-1}$.

 Let $d = \gcd(a_1, \ldots, a_{n-1})$.

The algorithm ${\mathcal A}$ does the following. If $n\leq k$, the algorithm makes a call to the algorithm from Theorem~\ref{thm:alg_ip}. 
Otherwise, we proceed by setting $x_n = b \cdot a_n^{-1} \pmod d $ (i.e., the unique integer in $\{0,1, \ldots, d-1\}$ such that $d$ divides $b - a_n x_n$) and
\[
(x_1, \ldots, x_{n-1}) = {\mathcal A}\left(\frac{a_1}{d}, \ldots, \frac{a_{n-1}}{d}, \frac{b - x_n a_n}{d}\right).
\]

Notice that $d$ is relatively prime to $a_n$ since $\gcd(d,a_n)  = \gcd(a_1, \ldots, a_n)  = 1$ and hence $a_n^{-1} \pmod d$ exists. 
The correctness of the recursive step follows from
Lemma~\ref{lem:induction_hyptothesis}, which implies that for any $i$, such that
$k < i  \le n,$
\[\frac{b-x_n a_n}{d} > \frac{b}{d^2} \ge \frac{(a_i/d)^2}{i-1} \;. \]

Notice that in the case of $d=1$,  the recursive call  simply sets $x_n=0$ and proceeds to solve the problem $ {\mathcal A}(a_1, \ldots, a_{n-1}, b)$ with one variable fewer.
\end{proof}

The following corollary is an immediate of \cref{thm:USS}.

\begin{corollary}\label{cor:USS}
    Let $\eps > 0$ be a fixed constant. There is a polynomial time algorithm
    that given an instance $(a_1,\ldots,a_n,b)$ of {\rm \USS} such that $b \ge \eps
    \cdot a_n^2$ and $\gcd(a_1,\ldots,a_n)$ divides $b$, finds $x_1,\ldots,x_n
    \in \Z_{\ge 0}$
    such that $\sum_{i=1}^n x_i a_i = b$.
\end{corollary}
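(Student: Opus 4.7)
The plan is to derive the corollary as an immediate instantiation of Theorem~\ref{thm:USS} by choosing $k$ to be a constant depending only on $\eps$. Since $\eps > 0$ is fixed, set $k = \lceil 1/\eps \rceil$, which is a constant independent of the input.

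The key observation is the following monotonicity: for every index $i \leq n$, we have $a_i \leq a_n$, so $\frac{a_i^2}{i-1} \leq \frac{a_n^2}{i-1}$. Therefore, whenever $i > k = \lceil 1/\eps \rceil$, we have $i - 1 \geq 1/\eps$, and the hypothesis $b \geq \eps a_n^2$ yields
\begin{equation*}
    b \;\geq\; \eps a_n^2 \;\geq\; \frac{a_n^2}{i-1} \;\geq\; \frac{a_i^2}{i-1}.
\end{equation*}
In other words, the precondition of Theorem~\ref{thm:USS} is satisfied with the constant $k = \lceil 1/\eps \rceil$, and the additional divisibility hypothesis $\gcd(a_1,\ldots,a_n) \mid b$ is assumed directly.

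Invoking Theorem~\ref{thm:USS} with this choice of $k$ produces integers $x_1,\ldots,x_n \in \Z_{\geq 0}$ with $\sum_{i=1}^n x_i a_i = b$ in time $\poly(n,\log b)\cdot (\log k)^{O(k)}$. Since $k$ depends only on the fixed constant $\eps$, the factor $(\log k)^{O(k)}$ is itself a constant, and the total running time reduces to $\poly(n,\log b)$. There is no real obstacle here; the only thing to verify cleanly is the inequality chain above, which ensures that Theorem~\ref{thm:USS}'s hypothesis holds for every $i$ with $k < i \leq n$.
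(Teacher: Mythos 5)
Your proof is correct and takes exactly the approach the paper intends: the paper states the corollary is ``an immediate [consequence] of \cref{thm:USS}'' without spelling it out, and your instantiation with $k = \lceil 1/\eps \rceil$, together with the chain $b \ge \eps a_n^2 \ge a_n^2/(i-1) \ge a_i^2/(i-1)$ for $i > k$, is precisely the verification that makes it immediate.
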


\section{Algorithm for Solving Variants of the ILP}

In this section, we focus on solving the \ILPE\ problem, which we restate here for the ease of the reader. Given $n$ distinct vectors $\veca_1 , \cdots , \veca_n \in \Z^d$ and a vector $\vecb \in \Z^d$, find $x_1, \ldots, x_n \in \N$ (if they exist) such that $\sum_{i=1}^n \veca_i x_i = \vecb$. We denote by $\vec{A} \in \Z^{d \times n}$ the matrix whose column vectors are $\vec{a}_1, \ldots, \vec{a}_n$. Thus, the problem can equivalently be stated as finding $\vec{x} \in \N^n$ such that $\vec{A} \vec{x} = \vecb$.

\begin{remark} 
We will restrict our attention to the assumption that $\vec{A}$ is a rank $d$ matrix. This is without loss of generality, since if $\vec{A}$ has rank $d' < d$, then we can use Gaussian elimination to find a subset of $d'$ rows that forms a matrix $\vec{A}' \in \Z^{d' \times n}$ (and let $\vecb'$ be $\vecb$ restricted to the same rows), and then it suffices to find a solution to $\vec{A}' \vec{x} = \vec{b}'$. This is because the rows of $\vec{A}$ are in the linear span of the rows of $\vec{A}'$, i.e., $\vec{A} = \vec{M} \vec{A}'$ for some $\vec{M} \in \R^{d \times d'}$. For a valid solution to exist, we must have that $\vec{M} \vec{b}' = \vec{b}$, which implies that $\vec{A} \vec{x} = \vec{b}$. 
\end{remark}

Additionally, we also assume that the first $d$ columns of $A$ are linearly independent. This can be easily achieved by appropriately permuting the columns.

\begin{definition}
    Given an integer  $V > 0$, we say that the matrix $\vec{A} \in \Z^{d \times n}$ satisfies the \emph{$V$-bounded} property if 
    \begin{itemize}
\item    for all  subsets $\{\vecv_1,\ldots,\vecv_{d-1}\}$ of $d-1$ vectors out of the first $d$ column vectors of $\vec{A}$, the determinant of $\lat(\vecv_1, \ldots, \vecv_{d-1})$ is at most $V$. 
\item The first $d$ columns of $\vec{A}$ are linearly independent.
    \end{itemize}
\end{definition}

Recall that $\Delta$ is the largest Euclidean norm among the vectors $\veca_1, \ldots, \veca_n$, rounded up to the next integer, i.e., $\Delta = \lceil \max_{i=1}^n \|\veca_i\| \rceil$.

\begin{theorem}
\label{thm:pre-main}
 There is an algorithm that takes as input a $V$-bounded matrix $\vec{A} \in
 \Z^{d \times n}$, and $\vec{b} \in \Z^d$, runs in time $\poly(n, \log{\Delta},
 \log \|\vec{b}\|)$, and does the following. Let $\vec{a}_1, \ldots, \vec{a}_n$ be the $n$ column vectors of $\vec{A}$. If there exist real $\alpha_1, \ldots, \alpha_d \ge (n-d) \cdot V \cdot \Delta$, and $\alpha_{d+1}, \ldots, \alpha_n \ge 0$ such that $\vecb = \sum_{i=1}^n \alpha_i \veca_i$, and $\vecb \in \lat(\veca_1, \ldots, \veca_n)$, then the algorithm finds $x_1, \ldots, x_n \in \N$ such that $\vecb = \sum_{i=1}^n \beta_i \veca_i$.
\end{theorem}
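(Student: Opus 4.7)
The plan is to combine an LP solver with a Hermite-style iterative lattice reduction, following the sketch given after Theorem~\ref{thm:simplified} in the introduction. First, I would invoke a polynomial-time LP routine to produce $\vec{\alpha}\in\R^n$ with $\vec{A}\vec{\alpha}=\vec{b}$, $\alpha_i \ge t:=(n-d)V\Delta$ for $i\le d$, and $\alpha_i\ge 0$ for $i>d$; such an $\vec{\alpha}$ exists by hypothesis. Decompose $\vec{\alpha} = \vec{z} + \vec{u}$ with $\vec{z}=\lfloor\vec{\alpha}\rfloor$ and $\vec{u}=\{\vec{\alpha}\}\in[0,1)^n$; since $t\in\N$, $z_i\ge t$ for $i\le d$ and $z_i\ge 0$ for $i>d$. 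Set $\vec{w}:=\vec{A}\vec{u}=\vec{b}-\vec{A}\vec{z}$; both $\vec{b}$ (by hypothesis) and $\vec{A}\vec{z}$ lie in $\lat(\vec{A})$, so $\vec{w}\in\lat(\vec{A})$.

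Next, I would iteratively reduce $\vec{w}$ in reverse order $j=n,n-1,\ldots,d+1$: with $\lat_j:=\lat(\veca_1,\ldots,\veca_j)$, the quotient $\lat_j/\lat_{j-1}$ is cyclic (generated by the class of $\veca_j$) of some order $m_j$. Using the basis of $\lat_{j-1}$ furnished by \cref{thm:findingbasis}, compute $m_j$ and then the unique smallest non-negative integer $\beta_j\in\{0,\ldots,m_j-1\}$ with $\vec{r}_j-\beta_j\veca_j\in\lat_{j-1}$, where $\vec{r}_j:=\vec{w}-\sum_{k>j}\beta_k\veca_k\in\lat_j$. After $n-d$ steps $\vec{r}_d\in\lat_d$; since $\veca_1,\ldots,\veca_d$ are linearly independent, there are unique integers $\beta_1,\ldots,\beta_d$ with $\vec{A}_{1:d}\vec{\beta}_{1:d}=\vec{r}_d$, recovered by inverting $\vec{A}_{1:d}$. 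I would output $\vec{x}:=\vec{z}+\vec{\beta}$: by construction $\vec{A}\vec{x}=\vec{A}\vec{z}+\vec{A}\vec{\beta}=\vec{A}\vec{z}+\vec{w}=\vec{b}$, and for $j>d$ we have $x_j=z_j+\beta_j\ge 0$ automatically.

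The crux, and the main obstacle, is showing $\beta_i\ge -t$ for $i\le d$, so that $x_i=z_i+\beta_i\ge t-t=0$. Applying $\vec{A}_{1:d}^{-1}$ to $\vec{A}_{1:d}\vec{\beta}_{1:d}=\vec{w}-\sum_{j>d}\beta_j\veca_j$ and writing $c_{j,i}:=(\vec{A}_{1:d}^{-1}\veca_j)_i$ yields
\begin{equation*}
\beta_i \;=\; u_i + \sum_{j>d}(u_j - \beta_j)\,c_{j,i}, \qquad i \le d.
\end{equation*}
By Cramer's rule, $c_{j,i}=\det(M_{j,i})/\det(\vec{A}_{1:d})$, where $M_{j,i}$ is $\vec{A}_{1:d}$ with column $i$ replaced by $\veca_j$; Hadamard bounds the numerator by $\|\veca_j\|\le\Delta$ times the lattice determinant of the remaining $d-1$ columns of $\vec{A}_{1:d}$, which is $\le V$ by the $V$-bounded hypothesis, while the denominator is a nonzero integer, hence of absolute value $\ge 1$. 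This gives $|c_{j,i}|\le V\Delta$. The technical heart of the argument is then showing $|\beta_j c_{j,i}|\le V\Delta$ for each $j>d$ \emph{individually}, even when $\beta_j$ is as large as $m_j-1$: the identity $m_j\veca_j\in\lat_{j-1}$, combined with the structure of $\det(\lat_{j-1})/\det(\lat_j)=m_j$, forces $m_j c_{j,i}$ to be an integer that cleanly absorbs the $\det(\vec{A}_{1:d})$ in the denominator, keeping its magnitude bounded by $V\Delta$. Summing over the $n-d$ indices $j>d$ gives $\bigl|\sum_{j>d}(u_j-\beta_j)c_{j,i}\bigr|\le(n-d)V\Delta=t$, whence $\beta_i\ge u_i-t\ge -t$. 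Polynomial runtime follows because LP, basis computation via \cref{thm:findingbasis}, Cramer inversion, and modular reductions all run in time $\poly(n,\log\Delta,\log\|\vec{b}\|)$.
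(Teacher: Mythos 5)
Your overall architecture matches the paper's: the LP decomposition $\vec{\alpha}=\vec{z}+\vec{u}$, the iterative reduction producing $\beta_n,\ldots,\beta_{d+1}$ with $\vec{r}_j-\beta_j\veca_j\in\lat_{j-1}$, and the recovery of $\beta_1,\ldots,\beta_d$ by inverting $\vec{A}_{1:d}$. Your bound via Cramer's rule is essentially a dual formulation of the paper's argument, which instead projects both sides of $\sum_{i\le d}(\beta_i-\{\alpha_i\})\veca_i=\sum_{j>d}(\{\alpha_j\}-\beta_j)\veca_j$ onto the Gram--Schmidt direction $\gs{a}_d$ and uses $\|\gs{a}_d\|=V_d/V_{d-1}$. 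Both come down to the same numbers.

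However, the step you yourself call the ``technical heart'' is asserted via a false claim. You write that ``$m_j\veca_j\in\lat_{j-1}$ \ldots forces $m_j c_{j,i}$ to be an integer.'' This does not follow: $m_j\veca_j\in\lat_{j-1}=\lat(\veca_1,\ldots,\veca_{j-1})$ is a \emph{larger} lattice than $\lat_d=\lat(\veca_1,\ldots,\veca_d)$, and integrality of $\vec{A}_{1:d}^{-1}(m_j\veca_j)$ would require $m_j\veca_j\in\lat_d$, which is not implied. A concrete counterexample with $d=2$, $n=4$: take $\veca_1=(4,0)^T$, $\veca_2=(0,4)^T$, $\veca_3=(2,2)^T$, $\veca_4=(1,1)^T$. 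One checks $\lat_3=\lat((4,0)^T,(2,2)^T)$ so $V_3=8$, and $\lat_4=\lat((4,0)^T,(1,1)^T)$ so $V_4=4$ and $m_4=2$; meanwhile $c_{4,1}=(\vec{A}_{1:2}^{-1}\veca_4)_1=1/4$, so $m_4c_{4,1}=1/2\notin\Z$. So the mechanism you propose for the per-term bound does not exist.

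The bound $|\beta_j c_{j,i}|\le V\Delta$ is nonetheless true, but for a simpler reason that you should state instead: since $\lat_d\subseteq\lat_{j-1}$, we have $m_j=V_{j-1}/V_j\le V_{j-1}\le V_d=|\det(\vec{A}_{1:d})|$, hence $\beta_j<m_j\le|\det(\vec{A}_{1:d})|$, so
\begin{equation*}
|\beta_j c_{j,i}|=\frac{\beta_j\,|\det(M_{j,i})|}{|\det(\vec{A}_{1:d})|}<|\det(M_{j,i})|\le V\Delta,
\end{equation*}
where the last inequality is the height-times-base bound you already gave. No integrality of $m_jc_{j,i}$ is needed. (This is exactly the cancellation the paper obtains by dividing by $\|\gs{a}_d\|=V_d/V_{d-1}$ and using $V_{j-1}\le V_d$.) With this substitution your proof closes; as written it has a genuine gap at the decisive inequality.
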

\begin{proof}
 Let $M = (n-d) \cdot V \cdot \Delta$. The algorithm begins by computing $V$ and $\Delta$, and then using linear programming to find some $\alpha_1,\ldots, \alpha_n \in \Q$ such that $\alpha_1, \ldots, \alpha_d \ge (n-d) \cdot V \cdot \Delta$, and $\alpha_{d+1}, \ldots, \alpha_n \ge 0$ such that $\vecb = \sum_{i=1}^n \alpha_i \veca_i$. Also, the algorithm finds a basis of $\lat(\vec{a}_1, \ldots, \vec{a}_n)$ using any efficient algorithm (for example,~\cite{DBLP:conf/eurocrypt/BennettGPS23}) and checks whether $\vecb \in \lat(\veca_1, \ldots, \veca_n)$. 
The algorithm then decomposes $\vecb$ as
$\vecb = \vecv + \vecw$, where
$\vecv = \sum_{i=1}^n \floor{\alpha_i } \veca_i $ and $\vecw = \sum_{i=1}^n \{\alpha_i \} \veca_i $. 
Since $\vecb \in \lat(\veca_1, \ldots, \veca_n)$, and $\vecv \in \lat(\veca_1, \ldots, \veca_n)$, we have that $\vecw \in \lat(\veca_1, \ldots, \veca_n)$. It is sufficient to find integers $\beta_1, \ldots, \beta_d \ge -M$, and $\beta_{d+1}, \ldots, \beta_n \ge 0$ and 
\[
\vecw = \sum_{i=1}^n \beta_i \veca_i\;.
\]
In the following, we show how to find such  integers $\beta_1, \ldots, \beta_n$. The desired solution is then given by $x_i = \lfloor \alpha_i \rfloor + \beta_i$ for $1 \le i \le n$. 

Let $V_i$ denote the volume of the fundamental parallelepiped of $\lat(\veca_1, \ldots, \veca_i)$.

\begin{claim}
\label{claim:invariant}
There exists a polynomial time algorithm that finds $\beta_n, \ldots, \beta_{d+1}$ such that for any $i \in \{d+1, \ldots, n\}$, we have that $0 \le \beta_i < V_{i-1}$, and $\vecw - \sum_{j=i+1}^n \beta_j \veca_j \in \latt(\veca_1, \ldots, \veca_i)$.
\end{claim}
\begin{proof}
We prove this claim by induction.
To see this, let $k \ge d+1$ be any positive integer. Suppose we have already found $\beta_n, \ldots, \beta_{k+1} \in \N$ that satisfy the above conditions for $i \ge k+1$. Let $\vecw' = \vecw - \sum_{j=k+1}^n \beta_j \veca_j \in \latt(\veca_1, \ldots, \veca_k)$. Our goal is to find $\beta_k \in \{0,1, \ldots, V_{k-1}-1\}$ such that $\vecw' - \beta_k \veca_k \in \latt(\veca_1, \ldots, \veca_{k-1})$.

Recall that the rank of the first $k-1$ vectors is $d$ because the first $d$ vectors are linearly independent. We first find a basis $\vec{B} \in \Z^{d \times d}$ of $\latt(\veca_1, \ldots, \veca_{k-1})$  using any efficient algorithm for computing a lattice basis given vectors generating the lattice, for example,~\cite{DBLP:conf/eurocrypt/BennettGPS23}. Note that $\vec{B}$ is an invertible matrix. Also, $|\text{det}(\vec{B})| = V_{k-1}$.

If $\vecw' \in \latt(\veca_1, \ldots, \veca_{k-1})$, i.e., $\vec{B}^{-1} \veca_k \in \Z^d$, then choosing $\beta_k = 0$ satisfies the desired condition.

Otherwise, we must have that for some $\gamma \in \Z$, $\vecw' - \gamma \veca_k \in \latt(\veca_1, \ldots, \veca_{k-1})$. This implies that $\vec{B}^{-1} (\vecw' - \gamma \veca_k) \in \Z^d$. 

We know that $\vec{B}^{-1} = \frac{\Adj(\vec{B})}{\det(\vec{B})}$, where  $\Adj(\vec{B})$, the adjugate of the matrix $\vec{B}$, is a $d \times d$ integer matrix, and $\det(\vec{B})$ is also an integer. This implies that all $d$ entries of
\[
\Adj(\vec{B})(\vecw' - \gamma \veca_k)
\]
are multiples of $\det(\vec{B})$. Moreover, $\vecw'$ and $\gamma \veca_k$ are integer vectors.
This leads to $d$ modular equations of the form
\[
x_i = \gamma y_i \pmod {\det(\vec{B})}  \;,
\]
where $x_i, y_i$ are the $i$-th coordinate of $\Adj(\vec{B}) \vecw'$ and $\Adj(\vec{B}) \veca_k$, respectively. 

Notice that there exists $\gamma \in \Z$ that satisfies these $d$ modular
equations (since  $\vecw' \in \lat(\veca_1, \ldots, \veca_k)$) and we want to
find any $\gamma$ that satisfies these $d$ modular equations. The existence of
such a $\gamma$ implies that $\gcd(y_i, \det(\vec{B}))$ divides $x_i$. Then, we can divide $x_i, y_i, \det(\vec{B})$ by $\gcd(y_i, \det(\vec{B}))$ to obtain
\[
    x_i' = \gamma y_i' \pmod {z_i'} \;,
\]
where $z_i'$ is a factor of $\det(\vec{B})$, and $y_i', z_i'$ are coprime. This
gives 
\[
\gamma = y_i'^{-1} x_i' \pmod {z_i'} \;.
\]
for $i \in [d]$. This gives $\gamma$ modulo $\text{lcm}(z_1', \ldots, z_d')$,
and any integer that is $\gamma$ modulo $\text{lcm}(z_1', \ldots, z_d')$ is such
that $\vecw' - \gamma \veca_k \in \latt(\veca_1, \ldots, \veca_{k-1})$. We can
find such a $\gamma \in \{0,1, \ldots, \text{lcm}(z_1', \ldots, z_d') -1 \}$
using a variant of the Chinese Remainder Theorem that does not require the
moduli to be coprime (see, for example~\cite{CRT}). Since $\text{lcm}(z_1',
\ldots, z_d')$ divides $V_{k-1}$ for any $\gamma$, we have that $\gamma \pmod
{V_{k-1}}$ is also $\gamma \pmod {\text{lcm}(z_1', \ldots, z_d')}$. 

Then let $\beta_k = \gamma$. We have that, $\vecw' - \beta_k \veca_k \in \latt(\veca_1, \ldots, \veca_{k-1})$.
\end{proof}
We now turn to find $\beta_1, \ldots, \beta_d$. Let 
\[
\vecw^* := \vecw - \sum_{j=d+1}^n \beta_j \veca_j = \sum_{i=1}^d\{\alpha_i\} \veca_i + \sum_{j=d+1}^n (\{\alpha_i\} - \beta_j) \veca_j \;.
\]
We know by Claim~\ref{claim:invariant} that $\vecw^* \in \latt(\veca_1, \ldots,
\veca_d)$. Notice, that since $\veca_1, \ldots, \veca_d$ are linearly independent, there is a unique linear combination of $\veca_1, \ldots, \veca_d$ that is equal to $\vecw^*$, i.e.,
\[\vecw^* = \sum_{i=1}^d \beta_i \veca_i\;,\]
which implies that 
\begin{equation}\label{eq:beta1tod}
\sum_{i=1}^d (\beta_i - \{\alpha_i\})\veca_i = \sum_{j=d+1}^n (\{\alpha_i\} - \beta_j) \veca_j \;.
\end{equation}
Also, $\beta_1, \ldots,\beta_d \in \Z$ since $\vecw^* \in \latt(\veca_1, \ldots, \veca_d)$. It is easy to compute $(\beta_1, \ldots, \beta_d)=(\veca_1, \ldots, \veca_d)^{-1}\vecw^*$.

To argue for correctness, it is enough to prove that $\beta_i \ge -M$. By symmetry, it is
sufficient to prove that $\beta_d \ge -M$. Let the projection of $\veca_d$ orthogonal to $\veca_1, \ldots, \veca_{d-1}$ be $\tveca_d$.  We project both sides in the
direction of $\tveca_d$. Let $\pi_{\tveca_d}(\vecu)$ be the projection of any vector $\vecu \in \R^d$ in the direction of $\tveca_d$. We get by projecting both sides of Equation~\ref{eq:beta1tod} in the direction of $\tveca_d$ that 
\begin{align*}
|\beta_d - \{\alpha_d\}| \|\tveca_d\| &= \|\pi_{\tveca_d}(\sum_{j=d+1}^n (\{\alpha_i\} - \beta_j) \veca_j)\|\\
&\le \| \sum_{j=d+1}^n (\{\alpha_j\} - \beta_j) \veca_j\| \\
&\le \sum_{j=d+1}^n V_{j-1} \Delta\;,
\end{align*}
using triangle inequality, and that $\beta_j \in \{0,1, \ldots, V_{j-1}-1\}$, and hence $\{\alpha_j\} - \beta_j \in [-V_{j-1}+1, 1)$.
Notice that for any $j \ge d+1$, $\latt(\veca_1, \ldots, \veca_{j-1})$ is a superlattice of $\latt(\veca_1, \ldots, \veca_d)$, and hence $V_{j-1} \le V_d$. 

Also, 
\[
\|\tveca_d\| = \frac{V_d}{V_{d-1}} \;.
\]
Thus,
\[
|\beta_d - \{\alpha_d\}| \le \frac{\sum_{j=d+1}^d V_{j-1} \Delta}{\|\tveca_d\|} \le \frac{(n-d)V_d}{V_d/V_{d-1}} = \Delta \cdot V_{d-1} \cdot (n-d)  \le \Delta \cdot V \cdot (n-d) \;,
\]
using that $V_{d-1} \le V$. This implies that $\beta_d \ge \{\alpha_d\} - (n-d) \Delta V \ge -(n-d) \Delta V$, since $\{\alpha_d\} \ge 0$.
\end{proof}

\begin{remark}
The algorithm in Theorem~\ref{thm:pre-main} relies on being given a $V$-bounded
matrix $\vec{A}$. In time $n^d \cdot \poly(n, \log{\Delta})$, we can try all possible combination of $d$ vectors to find $d$ linearly independent column vectors in $\vec{A}$, say $\veca_{i_1}, \ldots, \veca_{i_d}$ that for some value of $V$ satisfy the following two conditions.
\begin{itemize}
    \item The volume of the lattice generated by any $d-1$ of these $d$ vectors is at most $V$.
    \item There exist non-negative real numbers $\alpha_1, \ldots, \alpha_n$ with $\alpha_{i_1}, \ldots, \alpha_{i_d} \ge (n-d) \cdot V \cdot \Delta$  such that $\vecb = \sum_{i=1}^n \alpha_i \veca_i$.
\end{itemize}
If such indices $i_1, \ldots, i_d$ exist, then the algorithm finds a non-negative integer solution by permuting the columns so that the first $d$ columns are $\veca_{i_1}, \ldots, \veca_{i_d}$, and then running the algorithm from Theorem~\ref{thm:pre-main}.
\end{remark}

Given any $\vec{A}\in \Z^{d \times n}$ of rank $d$ with column vectors $\veca_1, \ldots, \veca_n$, it is easy to find via Gaussian elimination, $d$ linearly independent vectors $\veca_{i_1}, \ldots, \veca_{i_d}$, and the corresponding value of $V$ is at most $\left(\max_{i=1}^n \|\veca_i\|\right)^d$. Our main theorem is thus an immediate corollary of Theorem~\ref{thm:pre-main}
\begin{theorem}
    \label{thm:main}
There is an algorithm is given $\vec{A} \in \Z^{d
\times n}$ with column vectors $\veca_1, \ldots, \veca_n$, and $\vecb \in
\lat(\veca_1, \ldots, \veca_n)$ such that there exist $\alpha_1, \ldots,
\alpha_d \ge (n-d) \cdot \left(\max_{i=1}^n \|\veca_i\|\right)^d$, and
$\alpha_{d+1}, \ldots, \alpha_n \ge 0$ such that $\vecb = \sum_{i=1}^n \alpha_i
\veca_i$, and vectors $\veca_1, \ldots, \veca_d$ are linearly independent. The algorithm
finds $\beta_1, \ldots, \beta_n \in \N$ such that $\vecb = \sum_{i=1}^n \beta_i
\veca_i$, and runs in time $\poly(n, \log{\Delta}, \log{\|\vec{b}\|})$.
\end{theorem}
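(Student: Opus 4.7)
The plan is to derive Theorem~\ref{thm:main} as an immediate consequence of Theorem~\ref{thm:pre-main} by exhibiting a concrete integer $V$ for which $\vec{A}$ is $V$-bounded. Since $\veca_1, \ldots, \veca_d$ are already linearly independent by hypothesis, the only non-trivial condition to establish is that for each $(d-1)$-subset $\{\vecv_1, \ldots, \vecv_{d-1}\}$ of the first $d$ columns of $\vec{A}$, the determinant of $\lat(\vecv_1, \ldots, \vecv_{d-1})$ is at most $V$.

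The central step is a direct application of Hadamard's inequality. For any $d-1$ linearly independent vectors $\vecv_1, \ldots, \vecv_{d-1} \in \Z^d$ drawn from the first $d$ columns of $\vec{A}$, one has
\begin{displaymath}
    \det\bigl(\lat(\vecv_1, \ldots, \vecv_{d-1})\bigr) \;\le\; \prod_{j=1}^{d-1}\|\vecv_j\| \;\le\; \bigl(\max_i\|\veca_i\|\bigr)^{d-1}.
\end{displaymath}
Accordingly, I would take $V := \Delta^{d-1}$, which is a positive integer and clearly dominates the right-hand side above. This certifies the $V$-bounded property for $\vec{A}$, and with this value
\begin{displaymath}
    (n-d)\cdot V\cdot \Delta \;=\; (n-d)\cdot \Delta^d,
\end{displaymath}
which, up to the harmless rounding built into $\Delta = \lceil \max_i\|\veca_i\|\rceil$, matches the lower bound on $\alpha_1, \ldots, \alpha_d$ assumed in Theorem~\ref{thm:main}.

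With the $V$-boundedness in hand and the remaining hypotheses—non-negativity of $\alpha_{d+1}, \ldots, \alpha_n$ and lattice membership $\vecb \in \lat(\veca_1, \ldots, \veca_n)$—inherited verbatim, a direct appeal to Theorem~\ref{thm:pre-main} supplies $\beta_1, \ldots, \beta_n \in \N$ with $\vecb = \sum_i \beta_i \veca_i$ within the claimed running time $\poly(n, \log\Delta, \log\|\vecb\|)$, since no additional preprocessing beyond computing $V$ and $\Delta$ is needed. I foresee no genuine obstacle here: the authors themselves describe the theorem as an immediate corollary, and the entire content of the argument is Hadamard's inequality, which translates the abstract geometric parameter $V$ of Theorem~\ref{thm:pre-main} into an explicit monomial in the Euclidean norms of the columns of $\vec{A}$.
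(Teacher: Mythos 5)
Your derivation is essentially the same as the paper's: Theorem~\ref{thm:main} is obtained as an immediate consequence of Theorem~\ref{thm:pre-main} by producing a concrete $V$ via Hadamard's inequality, and your exponent $V \le \Delta^{d-1}$ is in fact the correct one (the paper's prose remark writes $V \le (\max_i\|\veca_i\|)^d$, which is off by one and would only yield the weaker threshold $(n-d)\Delta^{d+1}$, so your version quietly repairs a small slip). The only remaining imprecision, which you rightly flag and which the paper shares, is the gap between $\Delta = \lceil\max_i\|\veca_i\|\rceil$ appearing in Theorem~\ref{thm:pre-main} and the unrounded $\max_i\|\veca_i\|$ used in the statement of Theorem~\ref{thm:main}.
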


As stated earlier, a standard ILP with inequalities, or more generally, a heterogeneous ILP can be reduced to an ILP with equalities. As it turns out, Theorem~\ref{thm:main} gives a better bound for an ILP, or heterogeneous ILP, when solved via this reduction. 

\begin{corollary}
    \label{cor:MixedILP}
There is an algorithm that does the following.  Given $d_1, d_2, n \in \N$,
$\vec{A}_1\in \Z^{d_1 \times n}$, $\vec{A}_2\in \Z^{d_2 \times n}$ with column
vectors $\veca_1^{(2)}, \ldots, \veca_n^{(2)} \in \Z^{d_2}$, $\vec{b}_1 \in
\Z^{d_1}$ and $\vec{b}_2 \in \lat(\veca_1^{(2)}, \ldots, \veca_n^{(2)})$. Let
polytope $\mathcal{K}:=\{\vec{x}\in \R^n: \vec{A}_1\vec{x} \le  \vec{b}_1, \;
\vec{A}_2 \vec{x} = \vec{b}_2\}$, and let $\veca_1, \ldots, \veca_n \in \Z^{d_1
+ d_2}$ be vectors formed by concatenating the column vectors of $\vec{A}_1$ and
$\vec{A}_2$ such that the first $d_2$ vectors are linearly independent, and let
$\Delta = \max_{i=1}^n \|\veca_i\|$. The algorithm runs in time $\poly(n,
\log{\Delta}, \log \|\vecb\|)$ and finds $\vec{x} \in \mathcal{K} \cap \N^n$ assuming the following condition holds. There exist $\alpha_1, \ldots, \alpha_{d_2}, \gamma_1, \ldots, \gamma_{d_1} \ge (n-d_1 - d_2) \cdot \Delta^{d_2+1}$, and $\alpha_{d_2+1}, \ldots, \alpha_n \ge 0$ such that $\vec{A}_1 (\alpha_1, \ldots, \alpha_n)^T + (\gamma_1, \ldots, \gamma_{d_1})^T = \vecb_1$ and $\vec{A}_2 (\alpha_1, \ldots, \alpha_n)^T = \vecb_2$.
\end{corollary}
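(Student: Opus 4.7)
The strategy is a standard reduction from heterogeneous ILP to \ILPEs\ by introducing $d_1$ non-negative slack variables $\vec{y}$, followed by an invocation of Theorem~\ref{thm:pre-main}. The key point of routing through Theorem~\ref{thm:pre-main} rather than applying Theorem~\ref{thm:main} directly is that the slack columns have Euclidean norm $1$; by placing them among the first $d_1+d_2$ columns of the reduced matrix, the relevant volume parameter $V$ scales like $\Delta^{d_2}$ instead of $\Delta^{d_1+d_2-1}$, which is what ultimately yields the exponent $d_2+1$ in the corollary.

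First I would form the reduced \ILPEs\ instance
\[
\widetilde{\vec{A}} = \begin{pmatrix} \vec{A}_1 & \vec{I}_{d_1} \\ \vec{A}_2 & \vec{0} \end{pmatrix} \in \Z^{(d_1+d_2)\times(n+d_1)}, \qquad \widetilde{\vec{b}} = \begin{pmatrix} \vec{b}_1 \\ \vec{b}_2 \end{pmatrix},
\]
and permute its columns so that $\veca_1,\ldots,\veca_{d_2}$ come first, followed by the $d_1$ slack columns, and then $\veca_{d_2+1},\ldots,\veca_n$. Since $\veca_1,\ldots,\veca_{d_2}$ are linearly independent in $\Z^{d_1+d_2}$ and the slack columns span the first $d_1$ coordinates, the first $d_1+d_2$ permuted columns form a linearly independent system, as required by Theorem~\ref{thm:pre-main}.

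Next I would estimate the volume parameter $V$. Any $(d_1+d_2-1)$-subset of the first $d_1+d_2$ columns contains at most $d_2$ vectors of norm $\le\Delta$ and the rest of norm $1$, so by Hadamard's inequality the associated lattice determinant is at most $\Delta^{d_2}$, giving $V\le\Delta^{d_2}$. The maximum column norm of $\widetilde{\vec{A}}$ is still $\Delta$, and the lattice-membership premise $\widetilde{\vec{b}}\in\lat(\widetilde{\vec{A}})$ holds because the slack columns already generate $\Z^{d_1}\times\{\vec{0}\}$ inside $\lat(\widetilde{\vec{A}})$ and $\vec{b}_2\in\lat(\veca_1^{(2)},\ldots,\veca_n^{(2)})$ by hypothesis. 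The given $\alpha_1,\ldots,\alpha_{d_2},\gamma_1,\ldots,\gamma_{d_1}$ serve as lower bounds on the first $d_1+d_2$ coefficients in the real feasibility certificate $\widetilde{\vec{A}}(\vec{\alpha},\vec{\gamma})^{T}=\widetilde{\vec{b}}$, and up to the $(n-d_1-d_2)$ versus $(n+d_1-(d_1+d_2))$ accounting they exceed the threshold $(n_{\text{new}}-d_{\text{new}})\cdot V\cdot\Delta$ required by Theorem~\ref{thm:pre-main}.

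Invoking Theorem~\ref{thm:pre-main} on $(\widetilde{\vec{A}},\widetilde{\vec{b}})$ then returns $(\vec{x},\vec{y})\in\N^{n+d_1}$ in time $\poly(n,\log\Delta,\log\|\vec{b}\|)$, and discarding the slack block $\vec{y}$ yields the desired $\vec{x}\in\mathcal{K}\cap\N^n$. The main subtle point of the argument is the column reordering that puts the slack columns inside the first $d_1+d_2$ positions: without it, a direct appeal to Theorem~\ref{thm:main} would only yield the weaker exponent $\Delta^{d_1+d_2}$ instead of $\Delta^{d_2+1}$, and verifying that the first $d_2$ original columns remain linearly independent together with the slack columns is where the hypothesis ``first $d_2$ vectors are linearly independent'' is used.
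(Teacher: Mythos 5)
Your proposal follows the paper's argument essentially verbatim: introduce $d_1$ slack variables to pass from the heterogeneous system to an equality system, observe that because the slack columns have unit norm any $d_1+d_2-1$ of the leading $d_1+d_2$ columns has Gram determinant at most $\Delta^{d_2}$ by Hadamard, and then invoke Theorem~\ref{thm:pre-main} with $V\le\Delta^{d_2}$ to get the exponent $d_2+1$. Your reordering of columns ($\veca_1,\ldots,\veca_{d_2}$ first, then the slack block) versus the paper's (slack block first) is immaterial since the same $d_1+d_2$ vectors occupy the leading positions.

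One step of your elaboration does not hold as stated, however. You claim that linear independence of $\veca_1,\ldots,\veca_{d_2}$ in $\Z^{d_1+d_2}$, together with the slack columns spanning the first $d_1$ coordinates, forces the combined system of $d_1+d_2$ leading columns to be linearly independent. This implication is false: with $d_1=1$, $d_2=2$, take $\veca_1=(1,1,0)^T$ and $\veca_2=(5,1,0)^T$; they are independent, yet $\veca_1-\veca_2=-4\,\vec{e}_1$ lies in the span of the slack column, so the three vectors are dependent. What the reduction actually requires is that the $\vec{A}_2$-parts $\veca_1^{(2)},\ldots,\veca_{d_2}^{(2)}$ be linearly independent in $\Z^{d_2}$, a strictly stronger condition than the one in the corollary's text. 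The paper's proof also does not spell out this step, so this appears to be an imprecision in the corollary's hypothesis itself; but the justification as you wrote it does not go through, and you should instead appeal to linear independence of the $\vec{A}_2$-parts of $\veca_1,\ldots,\veca_{d_2}$.
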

\begin{proof}
We do a standard reduction from heterogeneous ILP to ILP with equalities by introducing variables $\vecy = (y_1, \ldots, y_{d_1})^T$ such that the equations then become $\vecy + \vec{A}_1 \vecx = \vecb_1, \vec{A}_2 \vecx = \vecb_2$. These set of equations can be represented by the following matrix equation. 
\[
\begin{bmatrix}
\vec{I_{d_1}} & \vec{A}_1\\
\vec{0} & \vec{A}_2
\end{bmatrix} 
\begin{bmatrix}
\vec{y}\\
\vec{x}
\end{bmatrix} = 
\begin{bmatrix}
\vec{b}_1\\
\vec{b}_2 
\end{bmatrix}\;.
\]
We thereby obtain the result by observing that the volume of any $d_1 + d_2 -1$ out of the first $d_1+d_2$ columns is upper-bounded by $\max(\Delta^{d_2} \cdot 1^{d_1 - 1}, \Delta^{d_2-1} \cdot 1^{d_1}) = \Delta^{d_2}$. 
\end{proof}
Notice that if all the constraints are inequality constraints, then we get the following. 

\begin{corollary}
    \label{cor:ILP}
There is an algorithm that runs in time polynomial in the size of the input and
does the following.  The algorithm is given $d, n \in \N$, $\vec{A}\in \Z^{d \times n}$
such that $\veca_1, \ldots, \veca_n \in \Z^{d}$, $\vec{b} \in \Z^{d}$ are column
vectors of $\vec{A}$ and the first $d$ of them are linearly independent
Let $\Delta = \max_{i=1}^n \|\veca_i\|$ and 
let polytope $\mathcal{K}:=\{\vec{x}\in \R^n: \vec{A}\vec{x} \le  \vec{b}\}$. The algorithm finds $\vec{x} \in
\mathcal{K} \cap \N^n$ assuming the following conditions hold: There exist
$\gamma_1, \ldots, \gamma_{d} \ge (n-d) \cdot \Delta$, and $\alpha_1, \ldots, \alpha_n \ge 0$ such that $\vec{A} (\alpha_1, \ldots, \alpha_n)^T + (\gamma_1, \ldots, \gamma_d)^T = \vecb$.
\end{corollary}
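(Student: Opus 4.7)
The plan is to derive Corollary~\ref{cor:ILP} as the all-inequality specialization of Corollary~\ref{cor:MixedILP}, i.e.\ with $d_1 = d$, $d_2 = 0$, $\vec{A}_1 = \vec{A}$, and $\vec{A}_2$ vacuous. Equivalently, I would apply Theorem~\ref{thm:main} directly after a standard slack-variable reduction, sketched below.

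First I would introduce slack variables $\vec{y} = (y_1, \ldots, y_d) \in \N^d$ and rewrite the inequality $\vec{A}\vec{x} \le \vec{b}$ as the equivalent equality system
\[
\vec{A}' \begin{pmatrix} \vec{y} \\ \vec{x} \end{pmatrix} = \vec{b}, \qquad (\vec{y}, \vec{x}) \ge \vec{0},
\]
where $\vec{A}' := [\vec{I}_d \mid \vec{A}] \in \Z^{d \times (n+d)}$. Recovering $\vec{x}$ from any non-negative integer solution of this ILPE yields a member of $\mathcal{K} \cap \N^n$. I would then verify the hypotheses of Theorem~\ref{thm:pre-main} applied to $\vec{A}'$: its first $d$ columns are the standard basis $\vec{e}_1, \ldots, \vec{e}_d$, hence linearly independent, and any $d-1$ of them generate a sublattice of $\Z^d$ of determinant $1$, so $\vec{A}'$ is $V$-bounded with $V = 1$; the maximum column Euclidean norm is $\max(1, \Delta) = \Delta$; and $\lat(\vec{A}') = \Z^d \ni \vec{b}$, making the lattice-containment condition automatic.

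The real witness required by Theorem~\ref{thm:pre-main} is supplied directly by the hypothesis of Corollary~\ref{cor:ILP}: the $\gamma_i$ serve as the coefficients on the first $d$ (slack) columns and the $\alpha_j$ as those on the remaining $n$ columns, with the lower bound $(n-d) \cdot \Delta$ matching what the $d_2 = 0$ specialization of Corollary~\ref{cor:MixedILP}'s threshold $(n - d_1 - d_2) \cdot \Delta^{d_2+1}$ requires. Invoking Theorem~\ref{thm:pre-main} on $\vec{A}'$ then returns non-negative integers $(\vec{y}, \vec{x})$ with $\vec{I}_d \vec{y} + \vec{A}\vec{x} = \vec{b}$, i.e.\ $\vec{A}\vec{x} \le \vec{b}$, in time $\poly(n, \log \Delta, \log \|\vec{b}\|)$. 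The only routine check is that the $V$-bounded volume estimate collapses correctly when $d_2 = 0$; this is immediate because every slack column has unit norm, so no substantive obstacle remains.
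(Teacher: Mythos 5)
Your approach matches the paper's: specialize Corollary~\ref{cor:MixedILP} with $d_1 = d$, $d_2 = 0$, or equivalently append $d$ slack variables and apply Theorem~\ref{thm:pre-main} to the $d \times (n+d)$ equality system with matrix $\vec{A}' = [\vec{I}_d \mid \vec{A}]$. Your verification of the $V$-bounded property ($V=1$, since any $d-1$ standard basis vectors span a determinant-$1$ sublattice), the maximum column norm ($\max(1,\Delta)=\Delta$), and the lattice membership ($\lat(\vec{A}')=\Z^d \ni \vec{b}$) is correct.

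However, be careful about the coefficient lower bound; the slip you inherit from the paper's own statement of Corollary~\ref{cor:MixedILP} does not survive a direct application of Theorem~\ref{thm:pre-main}. That theorem, applied to $\vec{A}'$, which has $n+d$ columns of which the first $d$ are designated linearly independent, requires the first $d$ coefficients to be at least $\bigl((n+d)-d\bigr)\cdot V\cdot\Delta = n\Delta$, not $(n-d)\Delta$: the multiplicative factor counts the columns beyond the initial $d$, and after appending the slack columns there are $n$ of these, not $n-d$. The same off-by-$d_1$ discrepancy affects Corollary~\ref{cor:MixedILP}, whose proof re-computes only $V'=\Delta^{d_2}$ on the augmented $(d_1+d_2)\times(d_1+n)$ system and never re-derives the column count $n'-d'=n-d_2$, leaving the stated factor $(n-d_1-d_2)$ unjustified. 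Your write-up asserts that $(n-d)\Delta$ ``matches what Theorem~\ref{thm:pre-main} requires,'' but it does not. To close the gap you would need either to strengthen the hypothesis of Corollary~\ref{cor:ILP} to $\gamma_i \ge n\Delta$, or to argue that Theorem~\ref{thm:pre-main}'s bound tightens for slack-augmented matrices — but the key sum $\sum_{j>d}V_{j-1}\cdot\|\cdot\|$ in its proof still ranges over all $n$ non-slack columns of $\vec{A}'$, each contributing up to $\Delta$ (with $V_{j-1}=1$ and the relevant orthogonal projection of length $1$), so $n\Delta$ appears to be the natural bound and the $(n-d)$ factor should likely be corrected to $n$ in both corollaries.
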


\section{Lower Bound for the Totality Condition for ILP with Equalities}

In the previous section, in Theorem~\ref{thm:main}, we showed a sufficient condition under which an ILP always has a solution, i.e., it is a total problem. Furthermore, we gave a polynomial-time algorithm for finding such a solution. In this section, we show that the condition we obtained in Theorem~\ref{thm:main} is almost tight.

\begin{theorem} \label{thm:counter-example}
    For any $d \ge 2$, there exist $\vec{A} \in \Z^{d \times (d+1)}$ with column
    vectors $\veca_1, \ldots, \veca_{d+1}$, and $\vecb \in \lat(\veca_1, \ldots,
    \veca_{d+1})$ such that there exist $\alpha_1, \ldots, \alpha_{d+1} >
    \frac{\left(\max_{i=1}^{d+1} \|\veca_i\|\right)^d}{20\sqrt{d}}$ such that $\vecb
    = \sum_{i=1}^{d+1} \alpha_i \veca_i$ but there do not exist non-negative
    integers $\beta_1, \ldots, \beta_{d+1} \in \N$ such that $\vecb =
    \sum_{i=1}^{d+1} \beta_i \veca_i$. 
\end{theorem}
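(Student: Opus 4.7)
The plan is to construct, for each $d \ge 2$, an explicit pair $(\vec{A}, \vec{b})$ witnessing the claimed lower bound. Since $\vec{A} \in \Z^{d \times (d+1)}$ has one more column than rows, its integer kernel is rank one, spanned by a single primitive vector $\vec{k} \in \Z^{d+1}$; consequently every integer solution to $\vec{A}\vec{\beta} = \vec{b}$ is of the form $\vec{\beta}_0 + t \vec{k}$ for some $t \in \Z$. By Hadamard's inequality every entry of $\vec{k}$ has absolute value at most $\Delta^d$, and the strategy is to arrange all $d+1$ entries of $\vec{k}$ to be simultaneously of order $\Delta^d$.

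I pick $\vec{a}_1, \ldots, \vec{a}_d \in \Z^d$ to be nearly orthogonal with norms $\Theta(\Delta)$, so that each of the $d+1$ signed Cramer cofactors of $\vec{A}$ has absolute value $\Theta(\Delta^d)$; and take $\vec{a}_{d+1}$ as a generic integer perturbation chosen so that the gcd of these cofactors equals $1$, forcing the primitive kernel to inherit the full cofactor magnitudes. As a concrete prototype for $d=2$, the columns $\vec{a}_1 = (N, 1)$, $\vec{a}_2 = (1, N)$, $\vec{a}_3 = (N+1, N+2)$ yield the primitive kernel $\vec{k} = (2 - N^2,\ -(N^2 + N - 1),\ N^2 - 1)$, each coordinate of magnitude $\Theta(N^2) = \Theta(\Delta^d)$. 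An analogous family works in arbitrary dimension via $\vec{a}_i = N \vec{e}_i + \vec{e}_{i+1 \bmod d}$ (or a similar near-orthogonal configuration), with $\vec{a}_{d+1}$ tuned to produce a primitive kernel with entries of size $\Theta(\Delta^d)$ on the nose.

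I then select a particular integer solution $\vec{\beta}_0 \in \Z^{d+1}$ whose coordinates are $(\vec{\beta}_0)_j = -1$ for one distinguished index $j$, and $(\vec{\beta}_0)_i = M_i$ for $i \ne j$, where each $M_i$ is a positive integer strictly less than $|k_i|$; then I set $\vec{b} := \vec{A}\vec{\beta}_0$, which automatically lies in $\lat(\vec{A})$. At $t = 0$ the $j$-th coordinate of $\vec{\beta}_0$ is $-1 < 0$; for any $t \ne 0$ with $|t| \ge 1$, the inequality $M_i < |k_i|$ forces some coordinate $M_i + t k_i$ to be negative after even a single unit of shift. Hence no $t \in \Z$ gives a non-negative representation of $\vec{b}$.

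Finally, to exhibit a real representation with every $\alpha_i > \Delta^d / (20 \sqrt{d})$, I parameterize the real solution line as $\vec{\beta}_0 + s \vec{k}$ for $s \in \R$. Taking $s$ slightly above $1/|k_j|$ makes the $j$-th coordinate positive, and calibrating each $M_i$ to lie just below $|k_i|$ ensures every coordinate $M_i + s k_i$ is of order $\Theta(|k_i|) = \Theta(\Delta^d)$, after absorbing the $O(d^{d/2})$ slack from the Euclidean-to-sup-norm conversion into the constant $20\sqrt{d}$. The main technical obstacle is this balancing act: the integer window $[L, U]$ of admissible shifts $s$ must be strictly shorter than $1$ (killing every integer shift), while the real window around the chosen $s$ must simultaneously contain a point with all coordinates above $\Delta^d/(20\sqrt{d})$. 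This amounts to a finite linear-feasibility verification given the explicit values of $(M_i)$ and $\vec{k}$, and the factor $1/(20 \sqrt{d})$ is what makes both constraints simultaneously achievable.
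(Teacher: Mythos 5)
Your framework is exactly the one underlying the paper's proof: since $\vec{A}$ has rank $d$ and $d+1$ columns, the integer solution set of $\vec{A}\vec{\beta} = \vec{b}$ is a single coset $\vec{\beta}_0 + \Z\vec{k}$ of the primitive kernel $\vec{k}$, and you aim to choose $\vec{\beta}_0$ with one coordinate equal to $-1$ and every other coordinate $M_i$ satisfying $0 \le M_i < |k_i|$. This is precisely the shape the paper achieves: there the kernel is $\vec{k} = (-p_{d+1}P/p_1, \ldots, -p_{d+1}P/p_d, P)$ with $P = p_1\cdots p_d$, and the chosen $\vec{\beta}_0$ is $\bigl(p_{d+1}P/p_1 - 1, \ldots, p_{d+1}P/p_d - 1, -1\bigr)$, i.e. $M_i = |k_i|-1$. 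The paper phrases the non-existence argument modularly via CRT ($\beta_{d+1}\equiv -1\pmod{p_i}$ for all $i$ forces $\beta_{d+1}\ge P-1$), but this is just your ``window of admissible shifts lies strictly inside $(0,1)$'' argument in arithmetic dress.

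The genuine gap is that you never actually complete the construction. You reduce the theorem to finding a family of columns whose $d+1$ Cramer cofactors are all $\Theta(\Delta^d/\sqrt d)$ \emph{and} have gcd $1$, together with a valid integer base point, but you leave this as ``a finite linear-feasibility verification'' and ``$\vec{a}_{d+1}$ tuned.'' That tuning is exactly where the difficulty lives: if $\vec{a}_{d+1}$ leans too heavily toward one $\vec{a}_i$, the cofactor that drops $\vec{a}_i$ collapses; if it leans toward none, the gcd of the cofactors need not be $1$. For a statement quantified over all $d\ge 2$, a ``verify-it-for-each-instance'' reduction is not a proof. The paper resolves both issues in one stroke by taking $\vec{a}_i = p_i\vec{e}_i$ with $d$ nearly equal primes $p_i\approx\Delta$ and $\vec{a}_{d+1} = p_{d+1}(1,\dots,1)$ with $p_{d+1}\approx\Delta/\sqrt d$: the primality makes the cofactors $p_{d+1}\prod_{j\ne i}p_j$ and $P$ automatically coprime-structured (so CRT / primitivity is free), and the near-equality of the $p_i$ makes them all $\Theta(\Delta^d/\sqrt d)$ simultaneously.

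A second red flag is your claim to ``absorb the $O(d^{d/2})$ slack from the Euclidean-to-sup-norm conversion into the constant $20\sqrt d$.'' You cannot absorb a $d^{d/2}$ factor into a $\sqrt d$ factor; if your construction really loses $d^{d/2}$ (as it would if the $\vec a_i$ have $\ell_\infty$-norm $\approx N$ but Euclidean norm $\approx N\sqrt d$), then you only get $\alpha_i \gtrsim \Delta^d/d^{d/2}$, which is far weaker than the theorem's $\Delta^d/(20\sqrt d)$. To avoid this you must ensure $\|\vec{a}_i\|\approx\Delta$ for \emph{every} column, including $\vec{a}_{d+1}$; the paper does so by scaling $\vec{a}_{d+1}$ with a prime of size $\approx \Delta/\sqrt d$ so that $\|\vec{a}_{d+1}\| = p_{d+1}\sqrt d \approx\Delta$. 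With that fix the true slack is only $\sqrt d$, consistent with the theorem, but your proposal as written does not carry this out and even asserts a loss that would be fatal.
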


\begin{proof}
By~\cite{HB88}, there exist $c > 0$ such that there are at least $d$ distinct primes between $cd^2$, and $cd^2(1-1/d)$. Let $c \ge 5$. Let $p_1, \ldots, p_d$ be $d$ distinct primes between $d^*$, and $d^*(1-1/d)$.  Let $P = p_1 \cdots p_d$. Let $\Delta = \max_{i=1}^{d} p_i$. Then for all $i \in [d]$, $p_i \ge \Delta(1-1/d)$.  Also, let $p_{d+1}$ be a prime such that $\frac{\Delta}{2\sqrt{d}} \le p_{d+1} \le  \frac{\Delta}{\sqrt{d}}$. 
 
For $i= 1, \ldots, d$, let $\vec{a}_i = p_i \vec{e}_i \in \Z^d$, i.e., a vector
with $p_i$ in $i$-th coordinate and $0$, otherwise. Also, let $\veca_{d+1} =
(p_{d+1}, \ldots, p_{d+1}) \in \Z^d$. Note that for all $i \in [d+1]$,
$\|\vec{a}_i\| \le \Delta$. We first observe that $\lat(\veca_1, \ldots,
\veca_{d+1}) = \Z^d$. To see this, consider any vector $\vecc= (c_1, \ldots,
c_d) \in \Z^d$. This vector can be written as an integer combination of
$\veca_1, \ldots, \veca_{d+1}$ as follows. By the Chinese Remainder Theorem,
there is a unique integer $\gamma_{d+1}$ in $\{0,1, \ldots, P-1\}$ such that
$\gamma_{d+1} \equiv p_{d+1}^{-1} c_i \pmod {p_i}$ for $1 \le i \le d$. For $1
\le i \le d$, let $\gamma_i = \frac{c_i - \gamma_{d+1} p_{d+1}}{p_i}$. Note that by the
above reasoning $\gamma_1,\ldots,\gamma_d$ are integers. Moreover, it holds that $\vecc = \sum_{i=1}^{d+1} \gamma_i \veca_i$. 

Now, let $\vecb = (p_{d+1}(P-1) - p_1, p_{d+1}(P-1) - p_2, \ldots, p_{d+1}(P-1)
- p_d)^T \in \Z^d$. We first show that there exist large real $\alpha_1,\ldots,
\alpha_{d+1}$ such that $\vecb = \sum_{i=1}^{d+1} \alpha_i \vecb$. Let
$\alpha_{d+1} = \frac{P}{2}$, and for $1 \le i \le d$, $\alpha_i = P \cdot \frac{p_{d+1}}{2p_i} - 1- \frac{p_{d+1}}{p_i}$. Then $\vecb = \sum_{i=1}^{d+1}
\alpha_i \veca_i$, and for all $i \in [d]$, \[
\alpha_i \ge
\frac{\Delta}{2\sqrt{d}} \cdot \frac{\Delta^{d-1}(1-1/d)^{d-1}}{2} - 2 \ge 
\frac{\Delta^d}{4e \sqrt{d}} - 2 \ge \frac{\Delta^d}{20 \sqrt{d}}\;,\] where we use the fact that for $d \ge 2$, $(1-1/d)^{d-1} \ge \frac{1}{e}$, and that $\Delta \ge 5d^2 (1-1/d) \ge 10$.

We now show that there does not exist non-negative integers $\beta_1, \ldots, \beta_{d+1}$ such that $\vecb = \sum_{i=1}^{d+1} \beta_i \veca_i$. Suppose there exist such non-negative integers. Then, we must have that for $i \in [d]$
\[
\beta_i p_i + \beta_{d+1} p_{d+1} = P p_{d+1} - p_i - p_{d+1}\;,
\]
which implies that $\beta_{d+1} \equiv -1 \pmod {p_i}$ for all $i \in [d]$. By Chinese Remainder Theorem, this implies $\beta_{d+1} \equiv -1 \pmod P$. Since $\beta_{d+1}$ is non-negative, we must have that $\beta_{d+1} \ge P-1$. This implies that for $1 \le i \le d$, $\beta_i p_i \le P p_{d+1} - p_i - p_{d+1} - (P-1)p_{d+1} <0$, which is a contradiction.
\end{proof}

\bibliographystyle{alpha}
\bibliography{References}

\newcommand{\etalchar}[1]{$^{#1}$}
\begin{thebibliography}{LHOW08}

\bibitem[ABHS22]{DBLP:journals/jcss/AbboudBHS22}
Amir Abboud, Karl Bringmann, Danny Hermelin, and Dvir Shabtay.
\newblock Scheduling lower bounds via {AND} subset sum.
\newblock {\em J. Comput. Syst. Sci.}, 127:29--40, 2022.

\bibitem[AH10]{aliev2010feasibility}
Iskander Aliev and Martin Henk.
\newblock Feasibility of integer knapsacks.
\newblock {\em SIAM Journal on Optimization}, 20(6):2978--2993, 2010.

\bibitem[AHH11]{aliev2011expected}
Iskander Aliev, Martin Henk, and Aicke Hinrichs.
\newblock {Expected Frobenius numbers}.
\newblock {\em Journal of Combinatorial Theory, Series A}, 118(2):525--531,
  2011.

\bibitem[AJ05]{Alfonsin2005TheDF}
Ram{\'i}rez Alfonsin and L.~Jorge.
\newblock {The Diophantine Frobenius problem}.
\newblock 2005.

\bibitem[AWZ17]{DBLP:conf/stoc/ArtmannWZ17}
Stephan Artmann, Robert Weismantel, and Rico Zenklusen.
\newblock A strongly polynomial algorithm for bimodular integer linear
  programming.
\newblock In {\em Proceedings of the 49th Annual {ACM} {SIGACT} Symposium on
  Theory of Computing, {STOC} 2017}, pages 1206--1219. {ACM}, 2017.

\bibitem[BERW24]{bach2024forallexiststatementspseudopolynomialtime}
Eleonore Bach, Friedrich Eisenbrand, Thomas Rothvoss, and Robert Weismantel.
\newblock Forall-exist statements in pseudopolynomial time, 2024.

\bibitem[BGPS23]{DBLP:conf/eurocrypt/BennettGPS23}
Huck Bennett, Atul Ganju, Pura Peetathawatchai, and Noah Stephens{-}Davidowitz.
\newblock {Just How Hard Are Rotations of \(\mathbb{Z}^n\) ? Algorithms and
  Cryptography with the Simplest Lattice}.
\newblock In {\em Advances in Cryptology - {EUROCRYPT} 2023 - 42nd Annual
  International Conference on the Theory and Applications of Cryptographic
  Techniques, Proceedings, Part {V}}, volume 14008 of {\em Lecture Notes in
  Computer Science}, pages 252--281. Springer, 2023.

\bibitem[BL07]{bocker2007fast}
Sebastian Bocker and Zsuzsanna Lipt{\'a}k.
\newblock A fast and simple algorithm for the money changing problem.
\newblock {\em Algorithmica}, 48(4):413--432, 2007.

\bibitem[Bra42]{b2f296f6-cfc1-3d96-b687-b7c753783415}
Alfred Brauer.
\newblock On a problem of partitions.
\newblock {\em American Journal of Mathematics}, 64(1):299--312, 1942.

\bibitem[Bri17]{DBLP:conf/soda/Bringmann17}
Karl Bringmann.
\newblock A near-linear pseudopolynomial time algorithm for subset sum.
\newblock In {\em Proceedings of the Twenty-Eighth Annual {ACM-SIAM} Symposium
  on Discrete Algorithms, {SODA} 2017}, pages 1073--1084. {SIAM}, 2017.

\bibitem[BZ03]{Beck2003RefinedUB}
Matthias Beck and Shelemyahu Zacks.
\newblock {Refined upper bounds for the linear Diophantine problem of
  Frobenius}.
\newblock {\em Adv. Appl. Math.}, 32:454--467, 2003.

\bibitem[CD09]{DBLP:journals/tcs/ChenD09}
Xi~Chen and Xiaotie Deng.
\newblock On the complexity of {2D} discrete fixed point problem.
\newblock {\em Theor. Comput. Sci.}, 410(44):4448--4456, 2009.

\bibitem[CDT09]{DBLP:journals/jacm/ChenDT09}
Xi~Chen, Xiaotie Deng, and Shang{-}Hua Teng.
\newblock Settling the complexity of computing two-player nash equilibria.
\newblock {\em J. {ACM}}, 56(3):14:1--14:57, 2009.

\bibitem[CEH{\etalchar{+}}21]{DBLP:conf/soda/CslovjecsekEHRW21}
Jana Cslovjecsek, Friedrich Eisenbrand, Christoph Hunkenschr{\"{o}}der, Lars
  Rohwedder, and Robert Weismantel.
\newblock Block-structured integer and linear programming in strongly
  polynomial and near linear time.
\newblock In D{\'{a}}niel Marx, editor, {\em Proceedings of the 2021 {ACM-SIAM}
  Symposium on Discrete Algorithms, {SODA} 2021, Virtual Conference, January 10
  - 13, 2021}, pages 1666--1681. {SIAM}, 2021.

\bibitem[CKL{\etalchar{+}}24]{DBLP:conf/soda/CslovjecsekKLPP24}
Jana Cslovjecsek, Martin Kouteck{\'{y}}, Alexandra Lassota, Micha\l{}
  Pilipczuk, and Adam Polak.
\newblock Parameterized algorithms for block-structured integer programs with
  large entries.
\newblock In {\em Proceedings of the 2024 {ACM-SIAM} Symposium on Discrete
  Algorithms, {SODA} 2024}, pages 740--751. {SIAM}, 2024.

\bibitem[CM18]{DBLP:conf/soda/ChenM18}
Lin Chen and D{\'{a}}niel Marx.
\newblock Covering a tree with rooted subtrees - parameterized and
  approximation algorithms.
\newblock In {\em Proceedings of the Twenty-Ninth Annual {ACM-SIAM} Symposium
  on Discrete Algorithms, {SODA} 2018}, pages 2801--2820. {SIAM}, 2018.

\bibitem[CRT]{CRT}
{Chinese Remainder Theorem}.
\newblock
  \url{https://cp-algorithms.com/algebra/chinese-remainder-theorem.html}.
\newblock [Online; accessed 24-June-2024].

\bibitem[Dix90]{dixmier1990proof}
Jacques Dixmier.
\newblock Proof of a conjecture by {E}rd{\H{o}}s and {G}raham concerning the
  problem of {F}robenius.
\newblock {\em Journal of Number Theory}, 34(2):198--209, 1990.

\bibitem[EG72]{erdos1972linear}
Paul Erd{\H{o}}s and Ronald Graham.
\newblock On a linear {D}iophantine problem of {F}robenius.
\newblock {\em Acta Arithmetica}, 1(21):399--408, 1972.

\bibitem[Fel06]{fel2006frobenius}
Leonid~G. Fel.
\newblock Frobenius problem for semigroups.
\newblock {\em Functional Analysis and Other Mathematics}, 1(2):119--157, 2006.

\bibitem[FPL24]{FPLLL}
{The FPLLL Development Team. FPLLL, a lattice reduction library, Version:
  5.4.2.}
\newblock Available at \url{https://github.com/fplll/fplll}, 2024.

\bibitem[FPT04]{Fabrikant2004TheCO}
Alex Fabrikant, Christos~H. Papadimitriou, and Kunal Talwar.
\newblock The complexity of pure {N}ash equilibria.
\newblock In {\em Symposium on the Theory of Computing}, 2004.

\bibitem[FS11]{fukshansky2011bounds}
Lenny Fukshansky and Achill Sch{\"u}rmann.
\newblock Bounds on generalized {F}robenius numbers.
\newblock {\em European Journal of Combinatorics}, 32(3):361--368, 2011.

\bibitem[FT87]{Frank1987AnAO}
Andr{\'a}s Frank and {\'E}va Tardos.
\newblock An application of simultaneous diophantine approximation in
  combinatorial optimization.
\newblock {\em Combinatorica}, 7:49--65, 1987.

\bibitem[GRE23]{DBLP:conf/concur/GuttenbergRE23}
Roland Guttenberg, Mikhail~A. Raskin, and Javier Esparza.
\newblock {Geometry of Reachability Sets of Vector Addition Systems}.
\newblock In {\em 34th International Conference on Concurrency Theory, {CONCUR}
  2023,}, volume 279 of {\em LIPIcs}, pages 6:1--6:16. Schloss Dagstuhl -
  Leibniz-Zentrum f{\"{u}}r Informatik, 2023.

\bibitem[HB88]{HB88}
D.R. Heath-Brown.
\newblock The number of primes in a short interval.
\newblock 1988.

\bibitem[hFRZ15]{FAN2015533}
Ai~hua Fan, Hui Rao, and Yuan Zhang.
\newblock {Higher dimensional Frobenius problem: Maximal saturated cone, growth
  function and rigidity}.
\newblock {\em Journal de Mathématiques Pures et Appliquées},
  104(3):533--560, 2015.

\bibitem[HL64]{heap1964graph}
B.R. Heap and M.S. Lynn.
\newblock A graph-theoretic algorithm for the solution of a linear diophantine
  problem of {F}robenius.
\newblock {\em Numerische Mathematik}, 6:346--354, 1964.

\bibitem[HL65]{heap1965linear}
B.R. Heap and M.S. Lynn.
\newblock On a linear diophantine problem of {F}robenius: an improved
  algorithm.
\newblock {\em Numerische Mathematik}, 7:226--231, 1965.

\bibitem[HOR13]{DBLP:journals/mp/HemmeckeOR13}
Raymond Hemmecke, Shmuel Onn, and Lyubov Romanchuk.
\newblock {N-fold integer programming in cubic time}.
\newblock {\em Math. Program.}, 137(1-2):325--341, 2013.

\bibitem[HR96]{HANSEN1996578}
Paul Hansen and Jennifer Ryan.
\newblock Testing integer knapsacks for feasibility.
\newblock {\em European Journal of Operational Research}, 88(3):578--582, 1996.

\bibitem[HV87]{hujter1987exact}
Mih{\'a}ly Hujter and B{\'e}la Vizv{\'a}ri.
\newblock The exact solutions to the {F}robenius problem with three variables.
\newblock {\em Journal of the Ramanujan Mathematical Society}, pages 117--143,
  1987.

\bibitem[JLR20]{DBLP:journals/siamdm/JansenLR20}
Klaus Jansen, Alexandra Lassota, and Lars Rohwedder.
\newblock {Near-Linear Time Algorithm for {$n$}-Fold ILPs via Color Coding}.
\newblock {\em {SIAM} J. Discret. Math.}, 34(4):2282--2299, 2020.

\bibitem[JPY85]{Johnson1985HowEI}
David~S. Johnson, Christos~H. Papadimitriou, and Mihalis Yannakakis.
\newblock How easy is local search?
\newblock {\em 26th Annual Symposium on Foundations of Computer Science (sfcs
  1985)}, pages 39--42, 1985.

\bibitem[JR19]{DBLP:conf/innovations/JansenR19}
Klaus Jansen and Lars Rohwedder.
\newblock {On Integer Programming and Convolution}.
\newblock In {\em 10th Innovations in Theoretical Computer Science Conference,
  {ITCS} 2019}, volume 124 of {\em LIPIcs}, pages 43:1--43:17. Schloss Dagstuhl
  - Leibniz-Zentrum f{\"{u}}r Informatik, 2019.

\bibitem[Kan83]{DBLP:conf/stoc/Kannan83}
Ravi Kannan.
\newblock {Improved Algorithms for Integer Programming and Related Lattice
  Problems}.
\newblock In {\em Proceedings of the 15th Annual {ACM} Symposium on Theory of
  Computing, 25-27 April, 1983, Boston, Massachusetts, {USA}}, pages 193--206.
  {ACM}, 1983.

\bibitem[Kan87]{Kannan1987MinkowskisCB}
Ravi Kannan.
\newblock Minkowski's convex body theorem and integer programming.
\newblock {\em Math. Oper. Res.}, 12:415--440, 1987.

\bibitem[Kan92]{kannan1992lattice}
Ravi Kannan.
\newblock Lattice translates of a polytope and the {F}robenius problem.
\newblock {\em Combinatorica}, 12(2):161--177, 1992.

\bibitem[KKL{\etalchar{+}}23]{DBLP:journals/mp/KnopKLMO23}
Dusan Knop, Martin Kouteck{\'{y}}, Asaf Levin, Matthias Mnich, and Shmuel Onn.
\newblock High-multiplicity {N}-fold {IP} via configuration {LP}.
\newblock {\em Math. Program.}, 200(1):199--227, 2023.

\bibitem[Kle22]{DBLP:conf/soda/Klein22}
Kim{-}Manuel Klein.
\newblock {On the Fine-Grained Complexity of the Unbounded SubsetSum and the
  Frobenius Problem}.
\newblock In {\em Proceedings of the 2022 {ACM-SIAM} Symposium on Discrete
  Algorithms, {SODA} 2022}, pages 3567--3582. {SIAM}, 2022.

\bibitem[Len83]{DBLP:journals/mor/Lenstra83}
Hendrik~W. Lenstra.
\newblock Integer programming with a fixed number of variables.
\newblock {\em Math. Oper. Res.}, 8(4):538--548, 1983.

\bibitem[LHOW08]{DBLP:journals/disopt/LoeraHOW08}
Jes{\'{u}}s A.~De Loera, Raymond Hemmecke, Shmuel Onn, and Robert Weismantel.
\newblock {N}-fold integer programming.
\newblock {\em Discret. Optim.}, 5(2):231--241, 2008.

\bibitem[MP91]{MP91j}
N.~Megiddo and C.~H. Papadimitriou.
\newblock On total functions, existence theorems and computational complexity.
\newblock {\em Theoretical Computer Science}, 81(2):317--324, 1991.

\bibitem[Pap81]{Papadimitriou81}
Christos~H. Papadimitriou.
\newblock On the complexity of integer programming.
\newblock {\em J. {ACM}}, 28(4):765--768, 1981.

\bibitem[RA96]{ramirez1996complexity}
Jorge~L Ram{\'\i}rez-Alfons{\'\i}n.
\newblock Complexity of the {F}robenius problem.
\newblock {\em Combinatorica}, 16:143--147, 1996.

\bibitem[RR23]{DBLP:conf/focs/ReisR23}
Victor Reis and Thomas Rothvoss.
\newblock The subspace flatness conjecture and faster integer programming.
\newblock In {\em 64th {IEEE} Annual Symposium on Foundations of Computer
  Science, {FOCS} 2023, Santa Cruz, CA, USA, November 6-9, 2023}, pages
  974--988. {IEEE}, 2023.

\bibitem[Sel77]{Selmer1977}
Ernst~S. Selmer.
\newblock On the linear diophantine problem of {F}robenius.
\newblock {\em Journal für die reine und angewandte Mathematik},
  0293\_0294:1--17, 1977.

\bibitem[Sha08]{shallit2008frobenius}
Jeffrey Shallit.
\newblock {The Frobenius problem and its generalizations}.
\newblock In {\em International Conference on Developments in Language Theory},
  pages 72--83. Springer, 2008.

\bibitem[Syl82]{f85f7aa3-538b-330b-bb4c-261e026a757e}
J.~J. Sylvester.
\newblock On subvariants, i.e. semi-invariants to binary quantics of an
  unlimited order.
\newblock {\em American Journal of Mathematics}, 5(1):79--136, 1882.

\bibitem[SZZ18]{DBLP:conf/focs/SotirakiZZ18}
Katerina Sotiraki, Manolis Zampetakis, and Giorgos Zirdelis.
\newblock {PPP-Completeness with Connections to Cryptography}.
\newblock In {\em 59th {IEEE} Annual Symposium on Foundations of Computer
  Science, {FOCS} 2018}, pages 148--158. {IEEE} Computer Society, 2018.

\bibitem[{Ust}09]{2009SbMat.200..597U}
Alexey~V. {Ustinov}.
\newblock {The solution of Arnold's problem on the weak asymptotics of
  {F}robenius numbers with three arguments}.
\newblock {\em Sbornik: Mathematics}, 200(4):597--627, April 2009.

\bibitem[VD12]{Vempala2012IntegerPL}
Santosh~S. Vempala and Daniel Dadush.
\newblock Integer programming, lattice algorithms, and deterministic volume
  estimation.
\newblock 2012.

\bibitem[Vit76]{Vitek_1976}
Yehoshua Vitek.
\newblock {Bounds for a Linear Diophantine Problem of Frobenius, II}.
\newblock {\em Canadian Journal of Mathematics}, 28(6):1280–1288, 1976.

\end{thebibliography}
\end{document}